\declaretheorem[style=definition]{example}
\newtheorem{theorem}{Theorem}
\newtheorem{lemma}{Lemma}
\newtheorem{corollary}{Corollary}
\newtheorem{proposition}{Proposition}
\theoremstyle{definition}
\newtheorem{remark}{Remark}
\renewcommand\thmcontinues[1]{Continued}
\newcommand{\sX}{\textsf{X}}
\newcommand{\pcite}[1]{\citeauthor{#1}'s \citeyearpar{#1}}
\begin{document}

\title{A geometric approach to informed MCMC sampling}
%
\author{Vivekananda Roy\\ Department of Statistics, Iowa State University, USA}
 \date{}
\maketitle              

\begin{abstract}
  A Riemannian geometric framework for Markov chain Monte Carlo (MCMC)
  is developed where using the Fisher-Rao metric on the manifold of
  probability density functions (pdfs), informed proposal densities
  for Metropolis-Hastings (MH) algorithms are constructed. We exploit
  the square-root representation of pdfs under which the Fisher-Rao
  metric boils down to the standard $L^2$ metric on the positive
  orthant of the unit hypersphere. The square-root representation
  allows us to easily compute the geodesic distance between densities,
  resulting in a straightforward implementation of the proposed
  geometric MCMC methodology. Unlike the random walk MH that blindly
  proposes a candidate state using no information about the target,
  the geometric MH algorithms move an uninformed base density (e.g., a
  random walk proposal density) towards different global/local
  approximations of the target density, allowing effective exploration
  of the distribution simultaneously at different granular levels of
  the state space. We compare the proposed geometric MH algorithm with
  other MCMC algorithms for various Markov chain orderings, namely the
  covariance, efficiency, Peskun, and spectral gap orderings. The
  superior performance of the geometric algorithms over other MH
  algorithms like the random walk Metropolis, independent MH, and
  variants of Metropolis adjusted Langevin algorithms is demonstrated
  in the context of various multimodal, nonlinear, and high
  dimensional examples. In particular, we use extensive simulation and
  real data applications to compare these algorithms for analyzing
  mixture models, logistic regression models, spatial generalized
  linear mixed models and ultra-high dimensional Bayesian variable
  selection models. A publicly available R package accompanies the
  article.
\end{abstract}
\begin{keywords}
 Bayesian models, Markov chain Monte Carlo, Metropolis-Hastings,
  Riemann manifolds, variable selection
\end{keywords}


  \section{Introduction}
  \label{sec:int}

  Sampling from complex, high dimensional, discreet, and continuous
  probability distributions is a common task arising in diverse
  scientific areas, such as machine learning, physics, and
  statistics. Markov chain Monte Carlo (MCMC) is the most popular
  method for sampling from such distributions and among the
  different MCMC algorithms, Metropolis-Hastings (MH) algorithms
  \citep{metr:rose:rose:rose:tell:tell:1953, hast:1970} are
  predominant. In MH algorithms, given the current state $x$, a
  proposal $y$ is drawn from a density $f(y|x)$, which is then
  accepted with a certain probability. The random walk MH (RWM)
  algorithms use a symmetric $f$, that is $f(y|x) = f(x|y)$, for
  example when $y= x+ I$ with the increment $I$ following a
  normal/uniform density centered at the origin
  \citep[][chap. 7.5]{robe:case:2004}. The RWM algorithms are easy to
  implement, but since the proposal density $f$ does not use any
  information on the target density $\psi$, RWM can suffer from slow
  convergence, particularly in high dimensions \citep{neal:2003}. This
  led to the development of alternative MH proposals that exploit some
  information about the target density $\psi$. Intuitively, the informed MCMC
  schemes can avoid frequent visits to states with low target probabilities
  leading to faster convergence. Indeed, for sampling from continuous
  target densities, informative MH proposals such as those of the
  Metropolis adjusted Langevin algorithms (MALA)
  \citep{ross:doll:frie:1978, besa:1994, robe:twee:1996b} and
  Hamiltonian Monte Carlo (HMC) algorithms
  \citep{duan:kenn:pend:rowe:1987} have been constructed employing
  the gradient of the log of the target density.

  However, the standard MALA and HMC algorithms do not efficiently
  sample from high dimensional distributions with complex structures
  such as strong dependencies between variables, non-Gaussian shapes,
  or multiple modes \citep{giro:cald:2011, beta:2013,
    roy:zhan:2023}. Also, as mentioned in \cite{giro:cald:2011}, `the
  tuning of these MCMC methods remains a major issue'. The Euclidean
  MALA and HMC algorithms fail to take into account the geometry of
  the target distribution in the selection of step sizes. Indeed,
  using ideas from information geometry, \cite{giro:cald:2011}
  constructed manifold MALA and HMC methods called the manifold MALA
  (MMALA) and the Riemannian manifold HMC (RMHMC), respectively.
  MMALA and RMHMC adapt to the second-order geometric structure of the
  target, which allows these algorithms to align their proposals in
  the direction of the target that exhibits the greatest local
  variation and generally outperform their Euclidean counterparts in
  exploring high dimensional complex target distributions
  \citep{brof:roy:2023, giro:cald:2011}. However, the sophisticated
  form of the Hamiltonian employed in RMHMC necessitates the use of
  complex numerical integrators that are significantly more expensive
  than the numerical integrator employed in HMC. Since these manifold
  variants of MALA and HMC use a position-specific metric, it needs to
  be recomputed in every iteration, and generally, this computation
  scales cubically.  Also, for implementing these manifold chains,
  first and higher-order derivatives of the log target density are
  required. One must find appropriate alternatives if these
  derivatives are not available in closed form.  Also, MALA and HMC
  algorithms are not applicable for discrete distributions, although
  there have been some recent developments to extend these methods to
  form informative MH proposals for discrete spaces by emulating the
  behavior of gradient-based MCMC samplers on Euclidean spaces
  \cite[see
  e.g.][]{zane:2020,zhan:liu:2022,nish:duns:2020,pakm:pani:2013}. On
  the other hand, as mentioned in \cite{zane:2020}, it is `typically
  not feasible' to sample from their informed proposals in continuous
  state spaces.

 In this article, we propose an original Riemannian geometric framework for developing
 informative MH proposals, irrespective of whether the state space is
 discrete or continuous. In RWM, the current state $x$ is blindly
 moved to $x+I$ following a Euclidean random perturbation $I$, failing
 to take into account the non-Euclidean nature of the target space. On
 the other hand, starting with a `base', uninformed kernel $f$, our
 proposed method moves $f$ in the directions of $\psi$ to produce
 informed MH proposals. Our novel formulation has the merit of being
 straightforward and universally applicable to both discrete and
 continuous spaces exploiting their natural Riemannian geometry. The
 Fisher-Rao (FR) metric that we consider here is the `natural'
 metric on the space of probability density functions (pdfs)
 \citep{rao:1945} and it is known that the gradient under the FR
 metric is the fastest ascending direction of a distribution objective
 function \citep{amar:1998}. Thus, the FR metric should ideally be used
 to explore a distribution, although, as noted before, for manifold
 variants of MALA and HMC, the use of this metric generally leads to higher computational
 burden.

 To build computationally efficient, informative MCMC algorithms that
 adapt to the geometry of the target, we consider a novel approach
 using \pcite{bhat:1943} `square-root' representation for pdfs. Under
 this representation, the manifold of pdfs can be identified with the
 positive orthant of the unit sphere and the FR metric boils down to
 the standard $L^2$ metric (see Figure~\ref{fig:sqrttrans}). This
 simplifies computations through the availability of explicit,
 closed-form expressions for useful geometric quantities like geodesic
 paths, distances as well as exponential and inverse-exponential maps.
 Thus, the proposed general-purpose, geometric MCMC algorithms, unlike
 MMALA and RMHMC, do not require the first and higher-order
 derivatives of the log target density. Also, \cite{zane:2020}
 considers point-wise informed proposals for discrete spaces, whereas
 here, we construct informed proposals that incorporate both `local'
 moves of the base kernel and `global' moves respecting the geometry
 of the space of pdfs applicable to both discrete and continuous
 spaces. The concurrent global-local steps enable moves between the
 modes without augmenting the state space with a `temperature'
 variable as in a tempering scheme \citep{geye:1991,mari:pari:1992}
 and simultaneous exploration of the modal regions.
 
We provide new results based on Peskun, covariance,
efficiency and spectral gap orderings for comparing Markov chains with the same stationary distribution. These general theoretical
results are then used to demonstrate the improvement obtained by the
proposed geometric method over the uninformed base Markov chains. It
is known that the spectral gap is closely related to the convergence
properties of a Markov chain. Thus, the geometric MCMC leads to
superior convergence properties over the base RWM, independent MH, or other Markov chain kernels. For RWM algorithms to be geometrically ergodic, it is
necessary that the invariant density $\psi$ has moment generating
function \citep{jarn:twee:2003}, although for heavy-tailed target
distributions, the RWM chains can have a polynomial rate of convergence
\citep{jarn:robe:2007}. \cite{meng:twee:1996} proved that when the
support of $\psi$ is $\mathbb{R}$, the RWM chain cannot be uniformly
ergodic. The geometric, polynomial, and uniform
ergodicity definitions can be found in \cite{douc:moul:2018}. We provide examples
where the proposed geometric MH chain is uniformly ergodic, whereas
the MH chains with the base RWM or independent kernels are not even
geometrically ergodic. \cite{john:geye:2012} considered RWM for
densities induced by appropriate transformations to obtain geometric
ergodicity for RWM algorithms even when the original target density is
sub-exponential. This variable transformation method works for
densities with continuous variables, and as mentioned in
\cite{john:geye:2012}, it may cause other problems. For example, the
induced density can be multimodal even if the original density is not.
This article's extensive examples involving discrete and
continuous spaces using simulated and real data show
orders of magnitude improvements in geometric MCMC compared to RWM,
independent MH, and other MCMC algorithms.

The rest of the article is organized as
follows. Section~\ref{sec:sqrt} introduces the square-root
representation and the FR Riemannian geometric framework. In
Section~\ref{sec:mmh}, we define the informed proposal distributions
constructed by `moving' any base uninformed kernel in the directions
of the target density and its local and global approximations.
Section~\ref{sec:mcorder} provides some results for comparing general
state space Markov chains. Then, Section~\ref{sec:indgeo} uses these
general results to establish the superior performance of the proposed
geometric MCMC algorithms over the MH algorithms based on the
uninformed base kernel. Section~\ref{sec:samph} describes some methods
for sampling from the geometric proposal distributions for efficient
simulation using the proposed MH algorithms. In
Section~\ref{sec:exam}, we consider several widely used high
dimensional and nonlinear models with complex, multimodal target
distributions, namely mixture models, logistic models, spatial
generalized linear mixed models (GLMMs) and the Bayesian variable
selection models with spike and slab priors. While the target
distribution for the feature selection example is discrete, it is
continuous for the other examples. The proposed geometric methods
outperform several traditional and state-of-the-art MCMC algorithms in
all these examples. Finally, in Section~\ref{sec:disc} we discuss
possible extensions and future works. The supplementary material
contains an alternative formulation of the geometric MCMC method,
proofs of the theoretical results, additional technical derivations
for some MCMC algorithms, and extra plots and numerical results from
additional real data analyses and simulation studies for the various
examples considered in the paper.

\section{Square-root representation and Fisher-Rao metric}
\label{sec:sqrt}

Let
$\mathcal{F} = \{f: f: \mathbb{R} \rightarrow \mathbb{R}_+ \cup \{0\}
\; \mbox{with}\; \int_{\mathbb{R}} f(x) dx =1\}$ be the space of all
pdfs on $\mathbb{R}$. To keep the
notations simpler, in this section, we make the presentation in the context of
densities on $\mathbb{R}$ although it straightforwardly extends to
higher dimensions. For
$f \in \mathcal{F}$,
$T_{f} \mathcal{F} = \{ \partial f : \mathbb{R} \rightarrow \mathbb{R}
\; \mbox{with}\; \int_{\mathbb{R}} \partial f(x) f(x) dx = 0\}$ is the
tangent space, which is a linear space. Thus, $T_{f} \mathcal{F}$ can
be viewed as the set of all possible perturbations of $f$. The
Fisher-Rao (FR) metric \citep{rao:1945} is defined as
\[
  \langle \partial f_1,  \partial f_2 \rangle_f = \int_{\mathbb{R}} \partial f_1(x) \partial f_2(x) \frac{1}{f(x)} dx.
  \]
  Although the FR metric has several advantages, computing geodesic
  paths and distances for it are difficult. One solution is to
  consider \pcite{bhat:1943} `square-root' representation
  $\xi: \mathcal{F} \rightarrow \mathcal{M}$ given by $\xi (f) = \sqrt{f}$,
  where
  $\mathcal{M} = \{\rho: \mathbb{R} \rightarrow \mathbb{R}_+\cup \{0\}
  \; \mbox{with}\; \int_{\mathbb{R}} \rho^2(x) dx = 1\}$ (see Figure~\ref{fig:sqrttrans}). 
  We will
  use the inverse map $\xi^{-1} (\rho) = \rho^2 = f$ to take $\rho$
  back to $\mathcal{F}$, the space of densities. The square-root
  representation has previously been used in shape analysis
  \citep{sriv:klas:2010}, variational Bayes
  \citep{saha:bhar:kurt:2019}, sensitivity analysis
  \citep{kurt:bhar:2015}, quantum estimation theory
  \citep{facc:kim:2016} among other areas.
  \begin{figure}%
    \centering
    \includegraphics[width=0.5\textwidth]{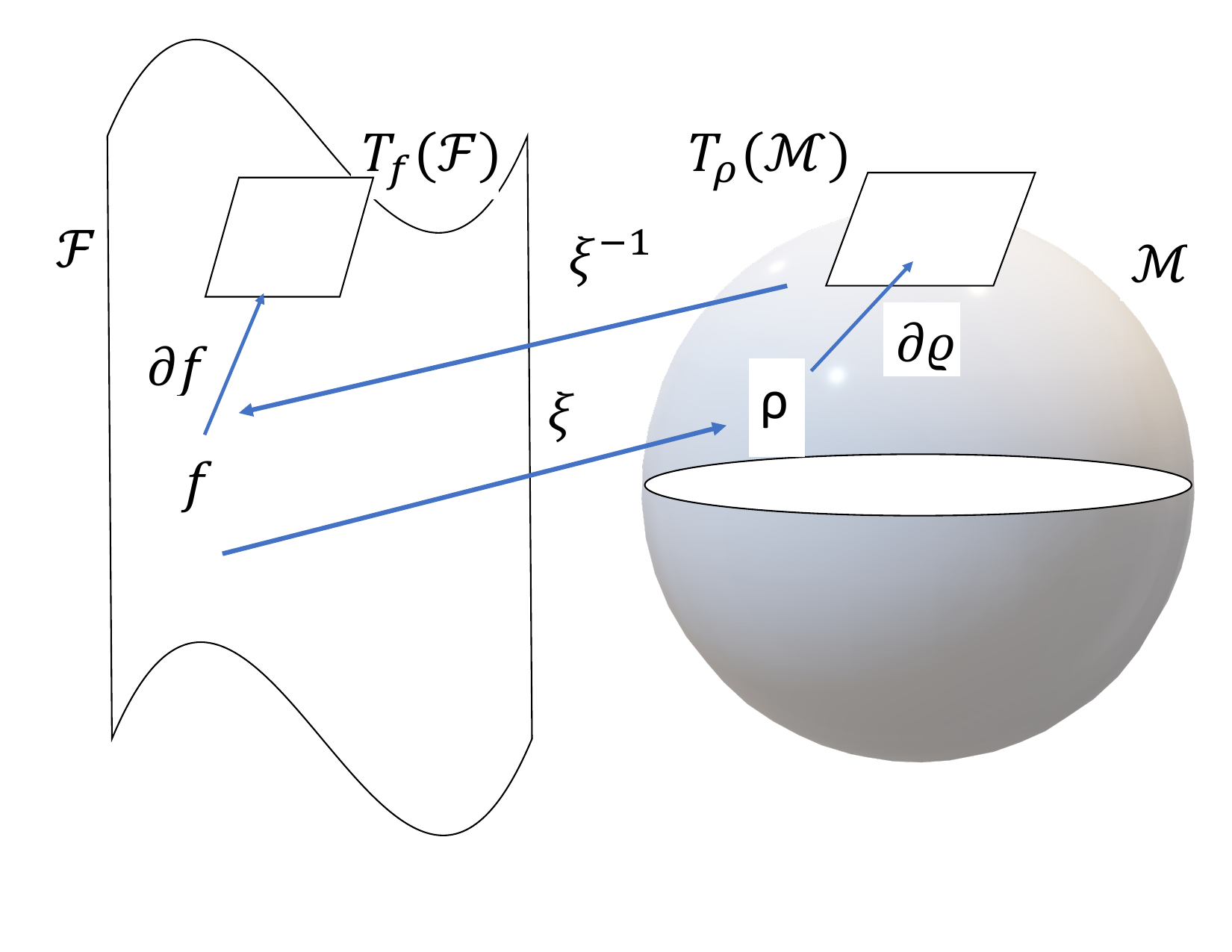}
    \caption{The square root transformation from $\mathcal{F}$, the space of pdfs to
      $\mathcal{M}$, the positive orthant of the unit sphere.}%
    \label{fig:sqrttrans}%
\end{figure}

  Note that $\mathcal{M}$ is the positive orthant of the unit sphere,
  where the FR metric boils down to the standard $L^2$ metric
  $\langle \rho_1, \rho_2 \rangle = \int_{\mathbb{R}} \rho_1(x)
  \rho_2(x) dx$ and geodesic paths and distances are available in
  closed form. Indeed, the geodesic distance between $\rho_1$ and
  $\rho_2$ in $\mathcal{M}$ is the angle between them
  $\theta = \cos^{-1} \langle \rho_1, \rho_2 \rangle$. Note that, for
  identical distributions $\rho_1 = \rho_2$ implying $\theta
  =0$. Also, by Jensen's inequality,
  $\langle \rho_1, \rho_2 \rangle = \langle f_1, \rho_2/\rho_1 \rangle
  \le 1$ implying that $\theta$ is bounded above by the right angle
  providing an upper bound to the geodesic distance between
  pdfs. Also, the geodesic path between $\rho_1$ and $\rho_2$ indexed
  by $r \in [0, 1]$ is
  $\varsigma(r) = [\sin (\theta)]^{-1} [\rho_1 \sin (\theta - r \theta) +
  \rho_2 \sin (r \theta)]$.

  Note that, for $\rho \in \mathcal{M}$, the tangent space
  $T_\rho \mathcal{M} = \{\partial \rho : \langle \partial \rho, \rho
  \rangle =0\}$ is a linear space. In analogy to vector addition
  $x + v$ in linear space that moves a point $x$ along the straight
  line in the direction of $v$, we define the exponential map
  $\exp_{\rho_1} : T_{\rho_1} \mathcal{M} \rightarrow \mathcal{M}$ as
  moving a point $\rho_1$ along the geodesic tangent to $\partial \rho$
  at $\rho_1$ and is defined as
  $\exp_{\rho_1} (\partial \rho) = \cos(\|\partial \rho\|) \rho_1 +
  \sin (\|\partial \rho\|) \partial\rho (\|\partial\rho\|)^{-1}$ where
  $\|\cdot \|$ is the $L_2$ norm.  The inverse exponential
  map
  $\exp^{-1}_{\rho_1} : \mathcal{M} \rightarrow T_{\rho_1}
  \mathcal{M}$ is given by
  $\exp^{-1}_{\rho_1} (\rho_2) = \theta [\sin (\theta)]^{-1}(\rho_2 -
  \cos (\theta) \rho_1)$ and it is used to map points from the
  representation space to the tangent space.

Unlike a linear space for which the tangent space is the same
  everywhere, it is not true for a general manifold. Fortunately, the
  parallel transport $\Gamma_{\rho_1}^{\rho_2}: T_{\rho_1} \mathcal{M}
  \rightarrow T_{\rho_2} \mathcal{M}$ provides a link between tangent
  spaces at different points along geodesic paths (great circles) in
  $\mathcal{M}$.  For $\partial \rho \in T_{\rho_1} \mathcal{M},
  \Gamma_{\rho_1}^{\rho_2} (\partial \rho) = \partial \rho - 2
  \langle \partial \rho, \rho_2 \rangle (\rho_1 + \rho_2)/ \|\rho_1 +
  \rho_2\|$.

\section{Manifold MH proposals}
\label{sec:mmh}
Let $\psi$ be the target density on ${\sX}$ with respect to some
measure $\mu$. MCMC algorithms simulate a Markov chain $\{X^{(n)}\}_{n
\ge 1}$ with some Markov transition function (Mtf) $Q(x, dy)$ that has
$\psi$ as its stationary density. While appropriate choices of $Q$
that uses information on $\psi$ results in a fast mixing Markov chain,
bad, uninformed choices can take weeks or even months to converge to
$\psi$.  To develop informative MCMC schemes that adapt to the
geometry of the target distribution, starting with any `baseline'
proposal density $f(y|x)$, we construct geometric MH proposals by
perturbing $f$ in the directions of $\psi$ or some approximations of
$\psi$.

 Let $\mathcal{G} = \{g_1, g_2, \dots, g_k\} $ be a set of $k$ pdfs
representing different (local/global) approximations of the target
density. Later in this section, we discuss possible choices for $\mathcal{G}$. For a given `baseline'
density $f$, the inverse exponential map $\chi_{g_i} \equiv
\exp^{-1}_{\sqrt{f}}(\sqrt{g_i})$ takes the density to the tangent
space $T_{\sqrt{f}} \mathcal{M}$, 
and then for a given step size $\epsilon$, the exponential map
$\exp_{\sqrt{f}}(\epsilon \exp^{-1}_{\sqrt{f}}(\sqrt{g_i})) =
\exp_{\sqrt{f}}(\epsilon \chi_{g_i})$ takes it back to $\mathcal{M}$
(Figure~\ref{fig:perturb}). Using the transformation $\xi^{-1}$
defined in Section~\ref{sec:sqrt}, we construct a `perturbed' pdf
$\{\exp_{\sqrt{f}}(\epsilon \chi_{g_i})\}^2$.
For this perturbed density by varying $\epsilon$ from zero and one,
the geodesic path from $f$ to $g_i$ can be traced. Note that,
$\{[\exp_{\sqrt{f}}(\epsilon \chi_{g_i})]^2, i=1,\dots,k\}$ is the set
of $\epsilon$-perturbations of $f$ along the directions specified by
$\mathcal{G}$. We use these geometrically perturbed densities to form
informative proposals for MH algorithms. Indeed, the proposed
geometric proposals use these densities to move an uninformed baseline
density $f$ along different approximations of the target density
$\psi$.

Recall that, in an RWM chain, given the current state $x$, an increment
(step size) $I$ is proposed according to a fixed density to move to
the candidate point $y= x+I$. Thus, the random perturbation $I$ used
for local exploration in the RWM algorithm does not take into account
the non-Euclidean structure of the target parameter space. On the
other hand, the step size and the directions used in the
geometric proposal density respect the geometry of pdfs.

\begin{figure}%
    \centering
    \includegraphics[width=0.5\textwidth]{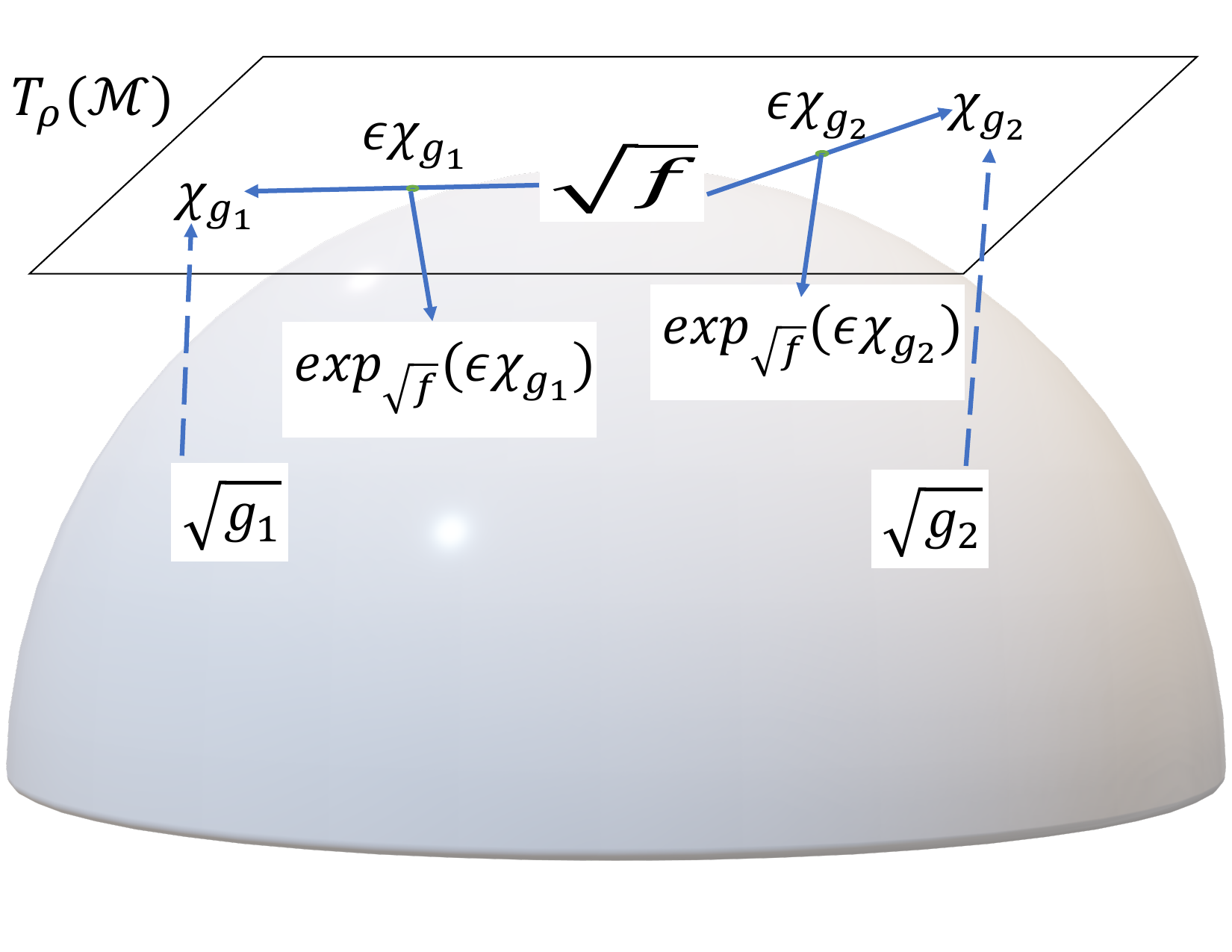}
  \caption{Geometric $\epsilon$ perturbation of $\sqrt{f}$ in the direction of $\sqrt{g_i}, i=1,2$.}%
    \label{fig:perturb}%
\end{figure}

\begin{proposition}
    \label{prop:perturb}%
     Given densities $f$ and $g_i$, $\{\exp_{\sqrt{f}}(\epsilon \exp^{-1}_{\sqrt{f}}(\sqrt{g_i}))\}^2(y|x)$ is a pdf for $\epsilon \in [0, 1]$, with
\begin{align}
      \label{eq:prop}
   &  \big\{\exp_{\sqrt{f}}(\epsilon \exp^{-1}_{\sqrt{f}}(\sqrt{g_i}))\big\}^2(y|x) \nonumber\\ &= \cos^2(\epsilon \theta_{i,x}) f(y|x) + \sin^2(\epsilon \theta_{i,x}) h_i(y|x) + \sin(2\epsilon \theta_{i,x}) \sqrt{f(y|x)} \zeta_i(y|x),
    \end{align}
    where $\theta_{i,x} = \cos^{-1} \big\langle \sqrt{f(y|x)}, \sqrt{g_i(y|x)} \big\rangle$,
    \begin{equation}
  \label{eq:defzeta}
\zeta_i(y|x) = \frac{\sqrt{g_i(y|x)} - \sqrt{f(y|x)} \big\langle
\sqrt{f(y|x)}, \sqrt{g_i(y|x)} \big\rangle}{\sqrt{1 - \big\langle \sqrt{f(y|x)}, \sqrt{g_i(y|x)} \big\rangle^2}},  
\end{equation}
and $h_i(y|x) = \zeta^2_i(y|x)$.
\end{proposition}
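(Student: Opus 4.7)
The plan is to substitute the closed-form expressions for the exponential and inverse exponential maps from Section~\ref{sec:sqrt} and then verify the normalization. I would proceed in two stages.

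\emph{Stage 1 (derivation of \eqref{eq:prop}).} Set $\rho_1 = \sqrt{f(\cdot|x)}$ and $\rho_2 = \sqrt{g_i(\cdot|x)}$, so that $\theta_{i,x} = \cos^{-1}\langle\rho_1,\rho_2\rangle$. Applying the inverse exponential map formula gives
\[
\partial \rho \equiv \exp^{-1}_{\rho_1}(\rho_2) = \frac{\theta_{i,x}}{\sin\theta_{i,x}}\bigl(\sqrt{g_i(y|x)} - \cos(\theta_{i,x})\sqrt{f(y|x)}\bigr).
\]
Using $\int f(y|x)\,dy = \int g_i(y|x)\,dy = 1$ and $\langle \sqrt{f},\sqrt{g_i}\rangle = \cos\theta_{i,x}$, a short computation shows $\|\partial\rho\| = \theta_{i,x}$ and that the tangent vector $\zeta_i$ defined in \eqref{eq:defzeta} is exactly $\partial\rho/\|\partial\rho\|$. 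Applying the exponential map with step $\epsilon$ therefore produces
\[
\exp_{\rho_1}(\epsilon\partial\rho) = \cos(\epsilon\theta_{i,x})\sqrt{f(y|x)} + \sin(\epsilon\theta_{i,x})\zeta_i(y|x).
\]
Squaring this sum and invoking $2\sin a\cos a = \sin(2a)$ yields the right-hand side of \eqref{eq:prop}.

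\emph{Stage 2 (verification that the expression is a pdf).} Non-negativity is immediate because \eqref{eq:prop} is the square of a real-valued function. For the total mass, I would expand the squared expression and use three facts: $\int f(y|x)\,dy = 1$; $\int \sqrt{f(y|x)}\,\zeta_i(y|x)\,dy = 0$, since $\zeta_i \in T_{\sqrt{f}}\mathcal{M}$ as a consequence of $\langle\sqrt{f},\sqrt{g_i}\rangle = \cos\theta_{i,x}$; and $\int \zeta_i^2(y|x)\,dy = 1$, obtained from a direct expansion using $\|\sqrt{f}\| = \|\sqrt{g_i}\| = 1$. Combining these with $\cos^2(\epsilon\theta_{i,x}) + \sin^2(\epsilon\theta_{i,x}) = 1$ shows that the expression integrates to one.

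The only mild subtlety is the degenerate case $\theta_{i,x} = 0$, i.e., $f(\cdot|x) \equiv g_i(\cdot|x)$, where $\zeta_i$ has the indeterminate form $0/0$. I would handle this by continuity: as $\theta_{i,x}\to 0$, the $\sin^2(\epsilon\theta_{i,x})h_i$ and $\sin(2\epsilon\theta_{i,x})\sqrt{f}\zeta_i$ terms vanish while $\cos^2(\epsilon\theta_{i,x})\to 1$, so the right-hand side reduces to $f(y|x)$, trivially a pdf. Equivalently, adopting the convention $\exp_{\rho_1}(0) = \rho_1$ makes \eqref{eq:prop} hold verbatim. No step poses a serious obstacle; the computation is essentially an application of spherical trigonometry on the unit hypersphere $\mathcal{M}$, with the key insight being that $\|\exp^{-1}_{\rho_1}(\rho_2)\|$ equals the geodesic angle $\theta_{i,x}$ itself.
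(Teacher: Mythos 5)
Your proposal is correct and follows essentially the same route as the paper's proof: substitute the closed-form exponential and inverse-exponential maps, observe that $\big\|\sqrt{g_i}-\cos(\theta_{i,x})\sqrt{f}\big\|=\sin\theta_{i,x}$ so the argument collapses to $\epsilon\theta_{i,x}$, square the resulting $\cos(\epsilon\theta_{i,x})\sqrt{f}+\sin(\epsilon\theta_{i,x})\zeta_i$, and verify $\|\zeta_i\|=1$ and $\langle\sqrt{f},\zeta_i\rangle=0$ to confirm unit total mass. Your added treatment of the degenerate case $\theta_{i,x}=0$ is a small extra care the paper omits but does not change the argument.
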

    
 Ignoring the last term in \eqref{eq:prop} we consider the
    following geometric proposal density 
    \begin{equation}
      \label{eq:mixprop}
            \phi_{i, \epsilon}(y|x) \equiv \cos^2(\epsilon \theta_{i,x}) f(y|x) + \sin^2(\epsilon \theta_{i,x}) h_i(y|x),
          \end{equation}
          which is a mixture of the densities $f$ and          
\begin{equation}
  \label{eq:defh}
h_i(y|x) = \zeta_i^2 (y|x) = \frac{\big(\sqrt{g_i(y|x)} - \sqrt{f(y|x)} \big\langle
\sqrt{f(y|x)}, \sqrt{g_i(y|x)} \big\rangle\big)^2}{1 - \big\langle \sqrt{f(y|x)}, \sqrt{g_i(y|x)} \big\rangle^2}.  
\end{equation}

Let $a = (a_1,\dots,a_k)$ be a probability vector, that is,
$a_i \ge 0$ with $\sum_{i=1}^k a_i =1$.  We now describe our proposed
geometric MH algorithm. Suppose $X^{(n-1)} = x$ is the current value
of the Markov chain.
\begin{algorithm}[H]
\caption{The $n$th iteration}
\begin{algorithmic}[1]
  \label{alg:mmhmix}
  \STATE Draw $y \sim \phi_{\epsilon} (y|x) \equiv \sum_{i=1}^k a_i \phi_{i,\epsilon} (y|x)$.
  \STATE Set 
    \begin{equation}
      \label{eq:accep}
    \alpha(x, y) = \mbox{min} \Big\{\frac{\psi(y) \phi_{\epsilon}(x|y)}{\psi(x) \phi_{\epsilon}(y|x)}, 1 \Big\}.
    \end{equation}
  \STATE Draw $\delta \sim$ Uniform $(0, 1)$. If $\delta < \alpha(x, y)$
  then set $X^{(n)} \leftarrow y$, else set $X^{(n)} \leftarrow x$.
\end{algorithmic}
\end{algorithm}
In Section~\ref{sec:samph}, we describe an efficient method for
sampling from the geometric MH proposal density
\eqref{eq:mixprop}. 
The step size $\epsilon$ determines the amount of perturbation of the
base density $f$ in the directions of $\mathcal{G}$. By changing
$\epsilon$ from zero to one, the geodesic paths from $f$ to the
densities in $\mathcal{G}$ are traced. Thus, smaller values of
$\epsilon$ lead to higher acceptance rates. So, given the base density
$f$ and the densities $g_i$'s, the single parameter $\epsilon$ provides a simple way of tuning
the geometric MCMC chains based on the acceptance rate criteria. We
now discuss some choices for the baseline pdf $f$ and useful global
and local approximations of $\psi$. 
Given the current state $x$, $f(y|x)$ can be the normal/uniform
density 
centered at the current state $x$, that is, $f$ is the proposal density often used in the RWM
algorithms mentioned in the Introduction. Or, $f$ can be the proposal
density used for an independent MH algorithm
\citep[][chap. 7.4]{robe:case:2004}, in which case $f(y|x) =f(y)$.
Now, we discuss possible choices for $\mathcal{G}$. For multimodal
targets these can be a set of densities centered at the local modes
(see Example~\ref{sec:exampmixture} in Section~\ref{sec:exam}), it can
be the target density constrained on an appropriate neighborhood of
the current state $x$ (see Example~\ref{sec:exampvarsel} in
Section~\ref{sec:exam}), these densities can be some normal
approximations to the target posterior density (see Example~\ref{sec:examplogistic} in
Section~\ref{sec:exam}) or the conditional posterior density (see Example~\ref{sec:exampsglmm} in
Section~\ref{sec:exam}) according to the
Bernstein-von Mises theorem or some variational approximations of the
target posterior density \citep{blei:kucu:2017}. Other possibilities
of $g_j$ can be
\begin{equation}
\label{eq:choiceg}
g_j(y|x) = \prod_{i=1}^d\psi(y_i|x_{-i}), \; \mbox{or} \; g_j(y|x) =
\prod_{i=1}^d\Big[\frac{1}{2}\psi(y_i|x_{-i}) + \frac{1}{2}
\frac{\psi(y_i|x_{-i})^{\tau_j}}{\int_{\sX_i}\psi(y_i|x_{-i})^{\tau_j}
dy_i} \Big],
\end{equation}
for some $\tau_j \in (0, 1)$, where $\psi(y_i|x_{-i})$'s are the full
conditionals, $x$ is the current state of the Markov chain, and
$\sX= \sX_1 \times \cdots \sX_d$. For the first choice in
\eqref{eq:choiceg}, the density $g_j$ does not change with $j$ and
thus $k=1$ in this case. For the second choice, the marginals of
$g_j(y|x)$ are mixtures of $\psi(y_i|x_{-i})$ and a flattened version
of it, which is suggested by \cite{zane:robe:2019} as a robust and
efficient choice for their Tempered Gibbs Sampler.

\begin{remark}
  Instead of considering a single MH proposal based on a mixture of
  $\phi_{i,\epsilon}$ as in Algorithm~\ref{alg:mmhmix}, a mixture of
  MH Mtfs based on each $\phi_{i,\epsilon}$ can also be considered
  (see Algorithm 2 in the supplement).  \cite{tier:1998} showed that
  Algorithm~\ref{alg:mmhmix} dominates the later algorithm in the
  Peskun sense. On the other hand, per iteration computation cost of
  Algorithm~\ref{alg:mmhmix} is higher than that of the later. The
  acceptance probability of Algorithm~\ref{alg:mmhmix} requires
  computation of $\phi_{i,\epsilon}$ for all $i \in \{1,\dots,k\}$,
  whereas that of the mixture of MH algorithms needs computing
  $\phi_{i,\epsilon}$ only for the sampled $i$.
\end{remark}

\begin{remark}
  The selection probability $a_i$ determines the proportion of moves
  in a specific direction. Intuitively, if certain approximation $g_i$
  is believed to be closer to the target, or if $g_i$ corresponds to a
  density around a local mode of $\psi$ with higher mass, then
  larger $a_i$ values can be used. In our empirical experiments, we
  have observed that setting $a_i=1/k, i=1,\dots,k$ is a safe choice
  as it leads to similar performance as the informative values of $a$.
  These probabilities $a_i$'s can also be chosen adaptively or
  according to some distributions \citep{levi:case:2006}.
\end{remark}
  \section{Ordering Markov chains}
  \label{sec:mcorder}
    In this section, we introduce some orderings for general state space
Markov chains, and later, we will use these results to compare geometric
MH chains with other MCMC algorithms. Let $P(x, dy)$ be a Mtf on $\sX$, equipped with a countably
generated $\sigma-$algebra $\mathbb{B}(\sX)$.  Let $\{X^{(n)} \}_{n
\geq 1}$ denote the Markov chain driven by $P$.  Throughout we assume
that $\{X^{(n)} \}_{n \geq 1}$ ($P$) is a {\it Harris ergodic}
chain with stationary density $\psi$. Thus, the estimator $\bar{t}_n := \sum_{i=1}^n t(X^{(i)})/n$ is
strongly consistent for $E_{\psi} t := \int_{\sX} t(x) \psi(x)\mu(dx)$ for all real-valued
functions $t$ with finite mean, no
matter what the initial distribution of $X_1$ is \cite[][chap. 17]{meyn:twee:1993}. We
say a central limit theorem (CLT) for $\bar{t}_n$ exists if for some
positive, finite quantity $v(t, P)$, $\sqrt{n} (\bar{t}_n - E_{\psi}
t) \stackrel{\text{d}}{\longrightarrow} N(0, v(t, P))$, as $n
\rightarrow \infty$.

Let $L^2(\psi)$ be the vector space of all real-valued, measurable
functions on $\sX$ that are square integrable with respect to
$\psi$. The inner product in $L^2(\psi)$ is defined as
$\langle t, s \rangle_\psi = \int_\sX t(x) \, s(x) \, \psi(x) \mu(dx) \,$. For two
Mtf's $P$ and $Q$ with the invariant density $\psi$, $P$ is said
to be more efficient than $Q$ if $v(t, P) \le v(t, Q)$ for all
$t \in L^2 (\psi)$ \citep{mira:geye:1999}. Another method of comparing
Markov chains is the {\it Peskun ordering} due to \cite{pesk:1973}
which was later extended to general state space Markov chains by
\cite{tier:1998}. The Mtf $P$ dominates $Q$ in the Peskun sense,
written $P \succeq Q$ if for $\psi$-almost all $x$,
$P (x, A \setminus \{x\}) \ge Q (x, A \setminus \{x\})$ for all
$A \in \mathbb{B}(\sX)$. In this case, a Markov chain driven by $P$ is
less likely to be held back in the same state for succeeding times
than a Markov chain driven by $Q$.

In order to define some other notions of comparing Markov chains,  note
that the Mtf $P$ defines an operator on $L^2(\psi)$ through,
$(P t) (x) = \int_{\sX} t(y) P(x, dy)$. Abusing notation, we use $P$
to denote both the Mtf and the corresponding operator. The Mtf $P$
is reversible with respect to $\psi$ if for all bounded functions
$t, s \in L^2(\psi)$, $\langle Pt, s \rangle_\psi = \langle t, Ps
\rangle_\psi$. The spectrum of the operator $P$ is defined as
\[
\sigma(P) = \Big\{ \lambda \in \mathbb{R}: P - \lambda I \;\,\mbox{is
  not invertible} \Big\}, \;
\]
where $I$ is the identity operator. For reversible $P$, it follows
that $\sigma(P) \subseteq [-1,1]$. Define the operator $P_0$ as
$P_0 t= Pt - E_{\psi} t$, for $t \in L^2(\psi)$. The speed of
convergence of the Markov chain with Mtf $P$ to stationarity
($\psi$) is determined by its spectral gap,
$\mbox{Gap}(P) := 1- \mbox{sup} \{|\lambda|: \lambda \in
\sigma(P_0)\}$. The Markov chain with Mtf $P$ converges at least as
fast as the Markov chain with Mtf $Q$ if
$\mbox{Gap}(P) \ge \mbox{Gap}(Q)$. Finally, $P$ dominates $Q$ in the
covariance ordering if $\langle Qt, t \rangle_\psi \ge \langle Pt, t \rangle_\psi$ for every
$t \in L^2(\psi)$, that is, the lag one autocorrelations of a
stationary Markov chain driven by $P$ are at most as large as for a
chain with Mtf $Q$. \cite{tier:1998} established that if $P \succeq Q$, then $P$ dominates
$Q$ both in covariance and efficiency
orderings. Theorem~\ref{thm:orde} shows that a modified Peskun
ordering implies modified covariance, efficiency as well as spectral gap orderings.
Let
  $\sigma^2_t = E_{\psi} t^2 -[E_{\psi} t]^2$ and
  $L^2_0(\psi) = \{s \in L^2(\psi): E_{\psi}s =0\}$.
  
\begin{theorem}
  \label{thm:orde}
  Let $P$ and $Q$ be two Mtf's with the same invariant density
  $\psi$. Assume that for $\psi$-almost all $x$ we have
  $P (x, A \setminus \{x\}) \ge c Q (x, A \setminus \{x\})$ for all
  $A \in \mathbb{B}(\sX)$ and for some fixed $c>0$.  Let $t \in L^2_0(\psi)$.
  \begin{enumerate}
  \item We have
    \[
      \langle Pt, t \rangle_\psi \le c \langle Qt, t \rangle_\psi + (1-c) \sigma^2_t.
    \]
  \item Further, assume that $P$ and $Q$ are reversible with respect
    to $\psi$. Then we have
    \begin{enumerate}
          \item Gap $(P) \ge c$ Gap $(Q)$.
    \item Also, if a CLT exists for $t$ under $P$ and $Q$, then
    \[
      v(t, P) \le \frac{v(t, Q)}{c} + \frac{1-c}{c} \sigma^2_t
    \]
    \end{enumerate}
  \end{enumerate} 
\end{theorem}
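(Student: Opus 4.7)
The whole argument rests on decomposing each transition kernel as $P(x, dy) = R_P(x)\, \delta_x(dy) + P_{\mathrm{off}}(x, dy)$, with $R_P(x) = P(x, \{x\})$ and $P_{\mathrm{off}}(x, A) = P(x, A \setminus \{x\})$, and similarly for $Q$. The hypothesis then reads $\nu(x, dy) := P_{\mathrm{off}}(x, dy) - c\, Q_{\mathrm{off}}(x, dy) \ge 0$ for $\psi$-a.e.\ $x$. Since $R_P(x) = 1 - P_{\mathrm{off}}(x, \sX)$ (and analogously for $Q$), the diagonal coefficient satisfies $R_P(x) - c R_Q(x) = (1 - c) - \nu(x, \sX)$; this is the bookkeeping that lets a $(1-c)\sigma^2_t$ term emerge naturally.

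For part 1, I would expand $\langle Pt, t \rangle_\psi - c\langle Qt, t\rangle_\psi$ along this decomposition. The diagonal contribution is $\int[(1-c) - \nu(x, \sX)]\, t^2(x)\, \psi(x)\, \mu(dx)$, and the off-diagonal contribution is $\iint t(x)\, t(y)\, \nu(x, dy)\, \psi(x)\, \mu(dx)$. Using $E_\psi t^2 = \sigma^2_t$ (since $t \in L^2_0(\psi)$), I obtain
\[
\langle Pt, t \rangle_\psi - c\langle Qt, t\rangle_\psi \; = \; (1-c)\sigma^2_t \; + \; \iint [t(x)t(y) - t^2(x)]\, \nu(x, dy)\, \psi(x)\, \mu(dx).
\]
The main obstacle is showing the double integral is nonpositive. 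The key observation is that $\psi$-invariance of both $P$ and $Q$ gives, for every $A \in \mathbb{B}(\sX)$, $\int P_{\mathrm{off}}(x, A)\, \psi(x)\, \mu(dx) = \psi(A) - \int_A R_P(x)\, \psi(x)\, \mu(dx) = \int_A P_{\mathrm{off}}(y, \sX)\, \psi(y)\, \mu(dy)$, and likewise for $Q$. Consequently the measure $\nu(x, dy)\, \psi(x)\, \mu(dx)$ on $\sX \times \sX$ has identical $x$- and $y$-marginals, so $\int t^2(x)\, \nu(x, \sX)\,\psi(x)\, \mu(dx) = \iint t^2(y)\, \nu(x, dy)\,\psi(x)\, \mu(dx)$. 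The double integral then symmetrizes into $-\tfrac{1}{2}\iint[t(x) - t(y)]^2\, \nu(x, dy)\, \psi(x)\, \mu(dx) \le 0$, yielding part 1. Note that only $\psi$-invariance (and not $\psi$-reversibility) of $P$ and $Q$ is required.

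Turning to part 2(a), part 1 translates directly into the Dirichlet-form comparison $\mathcal{E}_P(t, t) := \langle (I - P)t, t\rangle_\psi \ge c\, \mathcal{E}_Q(t, t)$ on $L^2_0(\psi)$ (using $\sigma^2_t = \|t\|_\psi^2$ for $t \in L^2_0(\psi)$). Since $P$ and $Q$ are now assumed reversible, they are self-adjoint on $L^2(\psi)$ and admit the Rayleigh-quotient characterization $\mbox{Gap}(P) = \inf_{t \in L^2_0(\psi),\, \|t\|_\psi = 1} \mathcal{E}_P(t, t)$, whence $\mbox{Gap}(P) \ge c\, \mbox{Gap}(Q)$ is immediate.

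For part 2(b), I would use the standard Kipnis--Varadhan representation $v(t, P) = 2\langle (I - P)^{-1} t, t\rangle_\psi - \|t\|_\psi^2$ for reversible $P$ admitting a CLT, together with the variational identity $\langle (I - P)^{-1} t, t\rangle_\psi = \sup_{g \in L^2_0(\psi)} \{2 \langle t, g\rangle_\psi - \mathcal{E}_P(g, g)\}$, which is an instance of the completing-the-square formula for $A^{-1}$ with $A = I - P$. The Dirichlet-form inequality from part 2(a) gives $\langle (I-P)^{-1} t, t\rangle_\psi \le \sup_g \{2\langle t, g\rangle_\psi - c\, \mathcal{E}_Q(g, g)\}$, and the substitution $g = h/\sqrt{c}$ converts the right-hand side into $c^{-1}\langle (I-Q)^{-1} t, t\rangle_\psi$. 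Plugging back into the Kipnis--Varadhan identity and collecting terms yields $v(t, P) \le v(t, Q)/c + (1-c)\sigma^2_t/c$, as required.
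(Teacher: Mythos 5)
Your proof is correct, but it follows a genuinely different route from the paper's. The paper handles the two cases $c>1$ and $c\le 1$ separately by constructing an auxiliary lazy kernel ($P_1 = c^{-1}P + (1-c^{-1})I$ or $Q_1 = cQ + (1-c)I$) that restores an exact Peskun domination, and then invokes Tierney's Lemma 3 and Theorem 4 together with the \L{}atusz\'ynski--Roberts formula for the asymptotic variance of lazy chains; part 2(a) is then derived from part 1 via the Rayleigh-quotient characterization, exactly as you do. You instead prove everything from scratch in a single unified argument valid for all $c>0$: for part 1 you isolate the excess off-diagonal measure $\nu(x,dy)\,\psi(x)\,\mu(dx)$, show via $\psi$-invariance that its two marginals coincide, and symmetrize to get $-\tfrac12\iint[t(x)-t(y)]^2\,\nu(x,dy)\,\psi(x)\,\mu(dx)\le 0$ --- this is essentially Tierney's Lemma 3 argument unrolled and generalized, and it makes transparent that only invariance (not reversibility) is needed there, which the paper's route also delivers but less visibly. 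For 2(b) your combination of the Kipnis--Varadhan spectral formula with the variational identity for $\langle (I-P)^{-1}t,t\rangle_\psi$ and the rescaling $g=h/\sqrt{c}$ reproduces in one step what the paper obtains by chaining Tierney's Theorem 4 with the lazy-chain variance identity; the algebra checks out and gives exactly the stated bound. What your approach buys is self-containedness, no case split on $c$, and a cleaner view of where each hypothesis is used; what the paper's buys is brevity by citation. One shared caveat (not a defect of your argument relative to the paper's): both proofs of 2(a) use $\mathrm{Gap}(P)=1-\sup_{\|t\|_\psi=1,\,t\in L^2_0(\psi)}\langle Pt,t\rangle_\psi$, which matches the paper's definition $1-\sup\{|\lambda|:\lambda\in\sigma(P_0)\}$ only when the negative part of the spectrum is not the binding constraint; and in 2(b) the identity $\langle (I-P)^{-1}t,t\rangle_\psi=\sup_g\{2\langle t,g\rangle_\psi-\mathcal{E}_P(g,g)\}$ should be read in the spectral/limiting sense when $t$ is not in the range of $I-P$, with the assumed CLT guaranteeing finiteness.
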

\begin{remark}
  \cite{zane:2020} presented the results in Theorem~\ref{thm:orde} for
  finite state space Markov chains.  Theorem~\ref{thm:orde} implies
  that if $P$ dominates $Q$ off the diagonal at least $c$ times then
  $P$ is $c$ times more efficient than $Q$ in terms of lag
  autocovariance, spectral gap, and asymptotic variance (ignoring the
  $\sigma^2_t$ term which, as \cite{zane:2020} mentioned, is typically
  much smaller than $v(t, Q)$).
\end{remark}
\section{Analysis of geometric MH algorithms}
  \label{sec:indgeo}

We now use Theorem~\ref{thm:orde} to compare the geometric MH chain
with the MH chain corresponding the base density $f(y|x)$. To that end, let
\begin{equation}
  \label{eq:di}
    d_i(y|x) = \frac{\big(\sqrt{g_i(y|x)/f(y|x)} - \big\langle \sqrt{f(y|x)}, \sqrt{g_i(y|x)} \big\rangle\big)^2}{1 - \big\langle \sqrt{f(y|x)}, \sqrt{g_i(y|x)} \big\rangle^2}
\end{equation}
and
\begin{equation}
  \label{eq:ci}
 c_{i,\epsilon}=\min \big(\inf_{x,y} \{\cos^2(\epsilon \theta_{i,x})+ \sin^2(\epsilon \theta_{i,x}) d_i(y|x)\}, \inf_{x,y} \{\cos^2(\epsilon \theta_{i,y})+ \sin^2(\epsilon \theta_{i,y}) d_i(x|y)\}\big).
\end{equation}
Note that, if $f$ and $g_i$ are symmetric in $(x, y)$ then so is $d_i$
in \eqref{eq:di}. In that case, the two terms inside the minimum in
\eqref{eq:ci} are the same.
\begin{theorem}
  \label{thm:peskmmh}
  Let $c_\epsilon=\sum_{i=1}^k a_i c_{i,\epsilon}$. Let $Q_f$
  be the Mtf of the MH chain with proposal density $f(y|x)$ and
  invariant density $\psi$. Let $P_\phi$ denote the Mtf of the Markov chain underlying
  Algorithm~\ref{alg:mmhmix}. Then, for $\psi$-almost all $x$, we have
  $P_\phi (x, A \setminus \{x\}) \ge c_\epsilon Q_f (x, A \setminus \{x\})$ for all
  $A \in \mathbb{B}(\sX)$.
\end{theorem}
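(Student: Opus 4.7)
The plan is to reduce the statement to a pointwise inequality between the off-diagonal densities of the two Metropolis–Hastings kernels with respect to $\mu$, and then lift it to the desired bound on $P_\phi(x, A\setminus\{x\})$ by integration. Recall that for any MH kernel with proposal $q$ and target $\psi$, the off-diagonal part is absolutely continuous with respect to $\mu$, with density at $y\neq x$ equal to $q(y\mid x)\,\min\{\psi(y)q(x\mid y)/(\psi(x)q(y\mid x)),\,1\}=\min\{\psi(y)q(x\mid y)/\psi(x),\,q(y\mid x)\}$. Writing this out for $Q_f$ with proposal $f$ and for $P_\phi$ with proposal $\phi_\epsilon=\sum_i a_i\phi_{i,\epsilon}$, the theorem will follow once we show, for $\psi$-a.e.\ $x$ and all $y\neq x$,
\[
\min\!\Big\{\frac{\psi(y)\phi_\epsilon(x\mid y)}{\psi(x)},\,\phi_\epsilon(y\mid x)\Big\}\;\ge\; c_\epsilon\,\min\!\Big\{\frac{\psi(y)f(x\mid y)}{\psi(x)},\,f(y\mid x)\Big\}.
\]

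The main algebraic step is to rewrite $\phi_{i,\epsilon}$ as $f$ times a nonnegative factor. Dividing \eqref{eq:defh} by $f(y\mid x)$ shows that $h_i(y\mid x)/f(y\mid x)=d_i(y\mid x)$, so
\[
\phi_{i,\epsilon}(y\mid x)=f(y\mid x)\,r_{i,\epsilon}(y\mid x),\qquad r_{i,\epsilon}(y\mid x):=\cos^2(\epsilon\theta_{i,x})+\sin^2(\epsilon\theta_{i,x})\,d_i(y\mid x).
\]
(The degenerate case $\theta_{i,x}=0$ gives $r_{i,\epsilon}\equiv 1$ by continuity.) Hence $\phi_\epsilon(y\mid x)=f(y\mid x)\sum_i a_i r_{i,\epsilon}(y\mid x)$, and swapping the roles of $x$ and $y$, $\phi_\epsilon(x\mid y)=f(x\mid y)\sum_i a_i r_{i,\epsilon}(x\mid y)$. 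The definition of $c_{i,\epsilon}$ in \eqref{eq:ci} is exactly the statement that both $r_{i,\epsilon}(y\mid x)\ge c_{i,\epsilon}$ and $r_{i,\epsilon}(x\mid y)\ge c_{i,\epsilon}$ uniformly in $(x,y)$, so taking the $a$-convex combination yields the two pointwise bounds
\[
\phi_\epsilon(y\mid x)\ge c_\epsilon\,f(y\mid x),\qquad \phi_\epsilon(x\mid y)\ge c_\epsilon\,f(x\mid y).
\]

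Combining these two inequalities with the elementary fact that $\min(a_1,a_2)\ge c\min(b_1,b_2)$ whenever $a_j\ge c\,b_j$ and $c>0$, applied to $a_1=\psi(y)\phi_\epsilon(x\mid y)/\psi(x)$, $a_2=\phi_\epsilon(y\mid x)$, $b_1=\psi(y)f(x\mid y)/\psi(x)$, $b_2=f(y\mid x)$, delivers the required pointwise bound displayed above. Integrating both sides against $\mu(dy)$ over $A\setminus\{x\}$ then gives $P_\phi(x,A\setminus\{x\})\ge c_\epsilon Q_f(x,A\setminus\{x\})$. The only modest obstacle is the bookkeeping to identify $h_i/f$ with $d_i$ and to verify that the quantity $c_{i,\epsilon}$ as defined controls both $r_{i,\epsilon}(y\mid x)$ and $r_{i,\epsilon}(x\mid y)$; the rest is the standard MH-density computation together with the min-preservation inequality.
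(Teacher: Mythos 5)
Your proof is correct and takes essentially the same route as the paper's: both reduce the claim to the pointwise identity $h_i(y\mid x)=d_i(y\mid x)f(y\mid x)$ together with the uniform lower bound built into the definition of $c_{i,\epsilon}$ in \eqref{eq:ci}, and then conclude via monotonicity of the minimum appearing in the off-diagonal MH density. The only organizational difference is that you first establish the proposal-density domination $\phi_\epsilon(y\mid x)\ge c_\epsilon f(y\mid x)$ and $\phi_\epsilon(x\mid y)\ge c_\epsilon f(x\mid y)$ and then apply the minimum once, whereas the paper first distributes the minimum over the mixture (min of sums $\ge$ sum of mins) and bounds each component separately; the two orderings are interchangeable and yield the same constant.
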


  If $f$ is the proposal density used for an independent MH algorithm
  \citep[][chap. 7.4]{robe:case:2004}, then $f(y|x) =f(y)$. In this
  case, if $g_i$'s also do not involve the current state $x$, then so
  do the proposal density $\phi_{i,\epsilon}$ given in
  \eqref{eq:mixprop}. The proposal density of the geometric MH becomes
  $\phi_{\epsilon} (y) = \sum_{i=1}^k a_i \phi_{i,\epsilon}
  (y)$. Thus, in this case, the geometric MH with a baseline
  independent MH proposal results in an independent MH algorithm, which we refer to as the independent geometric MH. In the following
  corollary, we compare the independent geometric MH chain with its
  baseline MH chain. Let $d_i=\min\{\inf_{x,y} d_i(y|x), \inf_{x,y}d_i(x|y)\}$.
  
\begin{corollary}
  \label{cor:peskindmh}
  Let $c'_\epsilon=\sum_{i=1}^k a_i (\cos^2(\epsilon \theta_i)+
  \sin^2(\epsilon \theta_i) d_i)$. Let $Q$ be the Mtf of the
  independent MH chain with proposal density $f(y)$ and invariant
  density $\psi$. Let $g_i(y|x) =g_i(y)$ for all $i=1,\dots,k$. Let $P$ denote the Mtf of the Markov chain
  underlying Algorithm~\ref{alg:mmhmix}. Then, for $\psi$-almost all
  $x$ we have
  $P_\phi (x, A \setminus \{x\}) \ge c'_\epsilon Q_f (x, A \setminus \{x\})$
  for all $A \in \mathbb{B}(\sX)$.
\end{corollary}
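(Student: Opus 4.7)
The plan is to derive Corollary~\ref{cor:peskindmh} as a direct specialization of Theorem~\ref{thm:peskmmh}. Under the hypotheses $f(y|x) = f(y)$ and $g_i(y|x) = g_i(y)$, the inner product $\langle \sqrt{f(y|x)}, \sqrt{g_i(y|x)} \rangle = \int \sqrt{f(y)}\sqrt{g_i(y)}\,\mu(dy)$ no longer depends on the conditioning state, so $\theta_{i,x}$ collapses to a state-independent constant $\theta_i = \cos^{-1}\langle \sqrt{f}, \sqrt{g_i}\rangle$. Inspecting \eqref{eq:di}, $d_i(y|x)$ depends only on $y$; write it as $d_i(y)$. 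Symmetrically, $d_i(x|y)$ depends only on $x$.

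Next I will simplify the two infima in the definition \eqref{eq:ci} of $c_{i,\epsilon}$. Because $\cos^2(\epsilon \theta_i)$ is constant in $x,y$ and $\sin^2(\epsilon \theta_i) \ge 0$, I have
\[
\inf_{x,y}\bigl\{\cos^2(\epsilon \theta_i) + \sin^2(\epsilon \theta_i)\, d_i(y|x)\bigr\} = \cos^2(\epsilon \theta_i) + \sin^2(\epsilon \theta_i)\, \inf_y d_i(y),
\]
and by relabeling the dummy variable the second infimum equals the same quantity, namely $\cos^2(\epsilon \theta_i) + \sin^2(\epsilon \theta_i)\,\inf_x d_i(x)$. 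Hence both infima coincide and $c_{i,\epsilon} = \cos^2(\epsilon \theta_i) + \sin^2(\epsilon \theta_i)\, d_i$ with $d_i$ as defined immediately before the corollary.

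Taking the weighted sum over $i$, $c_\epsilon = \sum_{i=1}^k a_i c_{i,\epsilon} = \sum_{i=1}^k a_i \bigl(\cos^2(\epsilon \theta_i) + \sin^2(\epsilon \theta_i)\, d_i\bigr) = c'_\epsilon$. Applying Theorem~\ref{thm:peskmmh} with this value of $c_\epsilon$ immediately yields the off-diagonal domination $P_\phi(x, A \setminus \{x\}) \ge c'_\epsilon\, Q_f(x, A \setminus \{x\})$ for $\psi$-almost all $x$ and all $A \in \mathbb{B}(\sX)$, which is the stated conclusion. There is no substantive obstacle beyond the bookkeeping to verify that the independence assumptions remove all $x$-dependence in the geometric quantities; the real content is already supplied by Theorem~\ref{thm:peskmmh}.
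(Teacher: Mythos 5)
Your proposal is correct and follows the same route as the paper, which simply observes that under the independence assumptions $\theta_{i,x}=\theta_i$ and hence $c_\epsilon$ from Theorem~\ref{thm:peskmmh} reduces to $c'_\epsilon$. Your additional bookkeeping---checking that $d_i(y|x)$ depends only on $y$, that the two infima in \eqref{eq:ci} coincide after relabeling, and that $c_{i,\epsilon}=\cos^2(\epsilon\theta_i)+\sin^2(\epsilon\theta_i)d_i$---is exactly the verification the paper leaves implicit.
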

The proof of Corollary~\ref{cor:peskindmh} follows from
Theorem~\ref{thm:peskmmh} as $\theta_{i,x}=\theta_i$ in this case and
thus $c_\epsilon=\sum_{i=1}^k a_i (\cos^2(\epsilon \theta_i)+
  \sin^2(\epsilon \theta_i) d_i) = c'_{\epsilon}$. 
\begin{remark}
  Since $d_i \ge 0$, from Corollary~\ref{cor:peskindmh} we have
  $c'_\epsilon \ge \sum_{i=1}^k a_i \cos^2(\epsilon \theta_i) =
  c^{''}_\epsilon$ say, that is, for the independent geometric MH
  chain,
  $P_\phi (x, A \setminus \{x\}) \ge c^{''}_\epsilon Q_f (x, A \setminus
  \{x\})$ for all $A \in \mathbb{B}(\sX)$.
\end{remark}

Finally, we use the following result from \cite{meng:twee:1996} to
study the convergence properties of the independent geometric MH
algorithm.

\begin{proposition}[Mengersen and Tweedie]
  \label{prop:uniergo}
  The Markov chain underlying the independent geometric MH algorithm 
  is uniformly ergodic if there exists
  $\beta >0$ such that
  \begin{equation}
    \label{eq:min}
    \frac{\phi_{\epsilon}(y)}{\psi(y)} \ge \beta, 
  \end{equation}
  for all $y$ in the support of $\psi$.
  Indeed, under \eqref{eq:min}
\[
\|P_\phi^n(x, \cdot) - \Psi(\cdot)\| \le (1 - \beta)^n,
\]
where $P_\phi^n(x, \cdot)$ is the $n$-step Markov transition function for
the independent geometric MH chain and $\Psi(\cdot)$ is the
probability measure corresponding to the target density $\psi$.

Conversely, if for every $\beta$, there exists a set with positive $\Psi$ measure where
 \eqref{eq:min} does not hold, 
then the manifold MH chain is not even geometrically ergodic.
\end{proposition}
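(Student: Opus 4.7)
The plan is to invoke \citet{meng:twee:1996} directly, leveraging the structural observation made just before Corollary~\ref{cor:peskindmh}: when the baseline $f$ and each $g_i$ do not depend on the current state, the geometric proposal $\phi_{\epsilon}(y) = \sum_{i=1}^k a_i \phi_{i,\epsilon}(y)$ is itself state-independent, so Algorithm~\ref{alg:mmhmix} reduces to an ordinary independent MH sampler with proposal $\phi_\epsilon$ and target $\psi$. Hence the proposition is an instance of the Mengersen-Tweedie theorem for independent MH specialized to this proposal, and my task reduces to exhibiting the minorization that drives it and sketching the converse.

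For the forward direction, I would first establish a full-space minorization. For any $A \in \mathbb{B}(\sX)$ with $x \notin A$,
\begin{equation*}
P_\phi(x, A) = \int_A \min\Big\{1,\; \frac{\psi(y)\phi_\epsilon(x)}{\psi(x)\phi_\epsilon(y)}\Big\}\phi_\epsilon(y)\,\mu(dy).
\end{equation*}
Splitting according to whether $\psi(y)\phi_\epsilon(x) \le \psi(x)\phi_\epsilon(y)$ or not: on the first region the integrand equals $\psi(y)\phi_\epsilon(x)/\psi(x) \ge \beta \psi(y)$ by \eqref{eq:min} applied at $x$, and on the second region it equals $\phi_\epsilon(y) \ge \beta \psi(y)$ by \eqref{eq:min} applied at $y$. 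In both cases the integrand dominates $\beta \psi(y)$, so $P_\phi(x, A) \ge \beta \Psi(A)$ for all $x$. The whole state space is therefore a small set with minorizing measure $\beta\Psi$, and the standard coupling/Nummelin splitting argument \citep[][chap.~17]{meyn:twee:1993} iterates this minorization to yield the geometric bound $\|P_\phi^n(x, \cdot) - \Psi(\cdot)\| \le (1-\beta)^n$.

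For the converse, I would argue contrapositively. Assume that for every $\beta > 0$ there is a set $B_\beta$ with $\Psi(B_\beta) > 0$ on which $\phi_\epsilon(y)/\psi(y) < \beta$. A direct estimate of the acceptance mechanism shows that from any $x \in B_\beta$ the probability of a single-step exit from $B_\beta$ is at most a multiple of $\beta$, since both the proposal mass on $B_\beta^c$ and the acceptance ratio there are small. Consequently the chain admits return-time distributions to $B_\beta$ whose tails cannot decay geometrically uniformly in $\beta$, contradicting geometric ergodicity. The main obstacle is making this heavy-tails argument fully rigorous across general state spaces; fortunately, our setting is a verbatim special case of the independent MH framework of \cite{meng:twee:1996}, so I would close the proof by invoking their Theorem~2.1 rather than re-deriving the construction.
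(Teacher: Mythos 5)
Your reduction is exactly the one the paper relies on: since $f$ and the $g_i$'s are state-independent here, $\phi_\epsilon$ is a fixed proposal density and Algorithm~\ref{alg:mmhmix} is an ordinary independence sampler, so the proposition is a verbatim instance of the Mengersen--Tweedie theorem. The paper in fact gives no proof at all---it quotes the result---so your forward-direction argument is a welcome addition, and it is the standard one: off the diagonal the transition density is $\min\{\phi_\epsilon(y),\,\psi(y)\phi_\epsilon(x)/\psi(x)\}\ge\beta\psi(y)$ under \eqref{eq:min}, the whole space is $(1,\beta\Psi)$-small, and iterating the minorization gives $\|P_\phi^n(x,\cdot)-\Psi(\cdot)\|\le(1-\beta)^n$. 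That part is correct and complete.

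One step in your converse sketch is mis-justified. You claim the one-step exit probability from $B_\beta$ is small ``since both the proposal mass on $B_\beta^c$ and the acceptance ratio there are small''; the proposal mass $\int_{B_\beta^c}\phi_\epsilon(y)\,\mu(dy)$ need not be small at all. The correct mechanism is purely through the acceptance probability: for $x\in B_\beta$ and any $A$ with $x\notin A$,
\[
P_\phi(x,A)=\int_A\min\Big\{\phi_\epsilon(y),\,\frac{\psi(y)\phi_\epsilon(x)}{\psi(x)}\Big\}\mu(dy)\le\frac{\phi_\epsilon(x)}{\psi(x)}\,\Psi(A)<\beta,
\]
so the chain leaves the \emph{point} $x$ with probability less than $\beta$, and the resulting arbitrarily heavy holding times are what rule out geometric ergodicity. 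Since you close by invoking \citet{meng:twee:1996} for the converse---which is precisely what the paper does---this does not invalidate your proof, but the heuristic as written would not survive being made rigorous in the form you state it.
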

\begin{corollary}
  A sufficient condition for \eqref{eq:min} is that for all
  $i=1,\dots,k$ with $a_i >0$,
  $\phi_{i, \epsilon}(y) \ge \beta \psi(y),$  for all $y$ in the support of $\psi$.
\end{corollary}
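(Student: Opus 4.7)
The plan is to verify that the mixture inequality $\phi_\epsilon(y)/\psi(y) \ge \beta$ required by Proposition~\ref{prop:uniergo} follows essentially by linearity from the componentwise lower bounds assumed in the corollary. The key observation is that in the independent MH setting considered here (where $f(y|x)=f(y)$ and $g_i(y|x)=g_i(y)$), each $\phi_{i,\epsilon}$ depends only on $y$, so the full proposal density used in Algorithm~\ref{alg:mmhmix} reduces to $\phi_\epsilon(y) = \sum_{i=1}^k a_i \phi_{i,\epsilon}(y)$.

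The first step is to write
\[
\phi_\epsilon(y) \;=\; \sum_{i=1}^k a_i \, \phi_{i,\epsilon}(y) \;=\; \sum_{i:\, a_i>0} a_i \, \phi_{i,\epsilon}(y),
\]
since the terms with $a_i=0$ drop out. Next, for each $y$ in the support of $\psi$ and each index $i$ with $a_i>0$, the hypothesis $\phi_{i,\epsilon}(y)\ge\beta\psi(y)$ applies, yielding
\[
\phi_\epsilon(y) \;\ge\; \beta\,\psi(y)\sum_{i:\, a_i>0} a_i \;=\; \beta\,\psi(y),
\]
because $a=(a_1,\dots,a_k)$ is a probability vector, so $\sum_{i:\,a_i>0}a_i=\sum_{i=1}^k a_i=1$. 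Dividing by $\psi(y)>0$ gives exactly \eqref{eq:min} with the same $\beta$, so Proposition~\ref{prop:uniergo} implies uniform ergodicity of the independent geometric MH chain.

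There is no real obstacle here; the corollary is a one-line consequence of the convex-combination structure of the geometric proposal. The only point worth flagging is the harmless restriction "$a_i>0$" in the hypothesis: indices with $a_i=0$ contribute nothing to $\phi_\epsilon$, so requiring a pointwise bound on those components would be both unnecessary and potentially vacuous (for instance, if the corresponding $g_i$ is undefined or not used).
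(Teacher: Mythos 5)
Your proof is correct and is exactly the immediate convex-combination argument the corollary rests on; the paper omits a proof precisely because the statement follows in one line from $\phi_\epsilon = \sum_i a_i\phi_{i,\epsilon}$ and $\sum_i a_i = 1$. Nothing is missing.
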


\section{Sampling from the geometric MH proposal density}
\label{sec:samph}
The simple mixture representation of the geometric MH proposal
density $\phi_{i, \epsilon}$ in \eqref{eq:mixprop} implies that given
$\epsilon$ and $\theta_{i,x}$, sampling from it can be done by
sampling from $f$ and $h_i$ with probabilities
$\cos^2(\epsilon \theta_{i,x})$ and $\sin^2(\epsilon \theta_{i,x})$,
respectively.  Since $f$ is generally the proposal density of an
uninformed MCMC algorithm already in use in practice, the ability to
sample from \eqref{eq:mixprop} requires successfully sampling from
$h_i$ and computing the mixing weights.

In the special case, when $\epsilon =1$, the mixing coefficients boil
down to $\langle \sqrt{f}, \sqrt{g_i} \rangle^2$ and
$1 - \langle \sqrt{f}, \sqrt{g_i} \rangle^2$. Now $\theta_{i, x}$ as
well as the pdf $h_i$ given in \eqref{eq:defh} involve
$\langle \sqrt{f}, \sqrt{g_i} \rangle$, which is available in closed
form if both $f$ and $g_i$ are normal densities. In particular, if
$f(y)=\phi_d(y;\mu_1,\Sigma_1)$ and $g(y)=\phi_d(y;\mu_2,\Sigma_2)$
where $\phi_d(y;\mu_i,\Sigma_i)$ denotes the probability density
function of the $d-$ dimensional normal distribution with mean vector
$\mu_i$, covariance matrix $\Sigma_i$, and evaluated at $y, i=1,2$,
then
\begin{equation}
\label{eq:normsqdist}
- \log(\langle \sqrt{f},
          \sqrt{g} \rangle) = \frac{1}{8} (\mu_1 -
          \mu_2)^\top\Sigma^{-1}(\mu_1 -
          \mu_2)+\frac{1}{2}\log(|\Sigma|/\sqrt{|\Sigma_1||\Sigma_2|}),
\end{equation}
where $\Sigma = (\Sigma_1+\Sigma_2)/2$. On the other
          hand, if $\langle \sqrt{f}, \sqrt{g_i} \rangle$'s are not
          available in closed form, they can be easily
          estimated. Indeed, since \[ \langle \sqrt{f}, \sqrt{g_i}
          \rangle =\Bigg\langle \sqrt{\frac{f}{g_i}}, g_i \Bigg\rangle =
          \Bigg\langle \sqrt{\frac{g_i}{f}}, f \Bigg\rangle, \] it can
          be consistently estimated using (iid or Markov chain)
          samples from either $g_i$ or $f$ and importance sampling
          methods. Indeed, if $\{X^{(\ell)}\}_{\ell=1}^n$ are
          realizations of a Harris ergodic Markov chain with
          stationary density $f$,  then 
$\sum_{\ell=1}^n            \sqrt{g_i(X^{(\ell)})/f(X^{(\ell)})}/n$ is a consistent
          estimator of $\langle \sqrt{f}, \sqrt{g_i} \rangle$. Note
          that for independent geometric chains, $\langle \sqrt{f},
          \sqrt{g_i}\rangle, i=1,\dots,k$ need to be computed only
          once. The accompanying R package geommc implements the
          importance sampling method if $f$ or $g_i$ is not a normal
          density.

From \eqref{eq:defh} we have
\begin{align}
  \label{eq:rejh}
  h_i(y|x) &= \frac{1}{\sin^2(\theta_{i,x})} \Big(g_i(y|x) + \cos^2(\theta_{i,x}) f(y|x) - 2 \cos(\theta_{i,x})\sqrt{f(y|x)g_i(y|x)} \Big)\nonumber\\
         &\le \frac{1}{\sin^2(\theta_{i,x})} \Big(g_i(y|x) + \cos^2(\theta_{i,x}) f(y|x)\Big) = \frac{1+\cos^2(\theta_{i,x})}{\sin^2(\theta_{i,x})} u_i(y|x),
\end{align}
where
$u_i(y|x) = [g_i(y|x) + \cos^2(\theta_{i,x})
f(y|x)]/[1+\cos^2(\theta_{i,x})]$ is a pdf. Thus, sampling from $h_i$
can be done by a rejection sampler, where a sample from $u_i$ is only
accepted with probability $h_i(x)/[M_{i,x} u_i(x)]$ where
$M_{i,x}= [1+\cos^2(\theta_{i,x})]/\sin^2(\theta_{i,x})$. Also, the
mixture representation of $u_i$ allows sampling from it by drawing
from $f$ and $g_i$ with weights depending only on $\cos(\theta_{i,x})$. Note that
$M_{i,x} \ge 1$, and $M_{i,x} \approx 1$ if
$\theta_{i,x} \approx \pi/2$, which is the case for early iterations
of the Markov chain as the baseline pdf $f$ is generally not `close'
to $g_i$, the approximations of the target density $\psi$. On the
other hand, if $\theta_{i,x}$ is small, for any choice of $\epsilon$,
the mixture weight for $h_i$ in \eqref{eq:mixprop} is small. Thus, in
that case, sampling from $\phi_{i,\epsilon}$ is likely done by
sampling from $f$. Similarly, for small $\epsilon$, the probability
weight for $h_i$ is low in \eqref{eq:mixprop}.  Also, for independent
geometric MH algorithms, $\theta_{i, x}$ needs to be computed
only once, and so is $M_{i, x}$.
\begin{remark}
  The proposed rejection sampler for sampling from $h_i$ and hence
  from the geometric proposal density $\phi_{i, \epsilon}$ in
  \eqref{eq:mixprop} assumes the ability to sample from the
  approximate densities $g_i$'s and to compute the $L^2$ inner product
  $\langle \sqrt{f}, \sqrt{g_i} \rangle, i=1,\dots,k$. In all
  continuous target examples considered in Section~\ref{sec:exam} we
  observe that the choice of Gaussian densities for $g_i$'s leads to
  effective inference using the resulting geometric MH algorithms. As
  mentioned before, in this case,
  $\langle \sqrt{f}, \sqrt{g_i} \rangle$ is available in closed form,
  if $f$ is also a Gaussian density.
\end{remark}

\section{Examples}
\label{sec:exam}
In this section, we illustrate the performance of the proposed
geometric MH algorithms in the context of six different examples. In
Examples~\ref{sec:exampnormtar} and \ref{sec:exampcauchtar}, the
independent and random walk MH chains, respectively, are known to
suffer from slow mixing, and the proposed framework is shown to turn
these chains, which are not even geometrically ergodic into uniformly
ergodic MCMC algorithms. Also, we use these two examples to
demonstrate and compare the performance of the general rejection
sampling scheme described in Section~\ref{sec:samph} with examples
specific tuned rejection sampling for the geometric MH proposal
density \eqref{eq:mixprop}. Example~\ref{sec:exampmixture} considers a
bivariate multimodal target corresponding to mixtures of normals where
the random walk MH schemes are known to get stuck in a local mode. As
in Examples~\ref{sec:exampnormtar} and \ref{sec:exampcauchtar}, the
geometric MH scheme successfully turns these poor mixing random walk
kernels into algorithms that efficiently move between the local
modes. In Example~\ref{sec:exampsixmode}, a geometric MH-within-Gibbs
sampler successfully finds all six modes of a non-normal, multi-modal
target density where the RW-within-Gibbs algorithm stays trapped in a
local mode, resulting in erroneous
inferences. Example~\ref{sec:examplogistic} considers the widely used
Bayesian logistic models where the geometric MH algorithm is compared
with a variety of other MH schemes, including the manifold MALA. In
Example~\ref{sec:exampsglmm}, these different algorithms are
constructed and compared in a MH-within-Gibbs framework in the context
of analyzing spatial GLMMs. Finally, Example~\ref{sec:exampvarsel} involves discrete target
probability mass functions (pmfs) arising from a popular Bayesian
variable selection model. The superiority of the geometric MH
algorithm over other MH schemes is demonstrated using extensive
ultra-high dimensional simulation examples as well as a real dataset
from a genome-wide association study (GWAS) with close to a million
markers.

\begin{example}[Independent MH for N (0, 1) target]
\label{sec:exampnormtar}
Let the target density $\psi$ be the standard normal N $(0, 1)$
  density. If the proposal density $f$ of the independent MH algorithm
   is N $(1, 1)$, then
  \begin{equation}
    \label{eq:indmhnorm}
    \frac{f(x)}{\psi(x)} = \exp(x - 0.5) \rightarrow 0 \;\;\mbox{as}\;\; x \rightarrow - \infty 
  \end{equation}
and the acceptance ratio 
\[
\alpha(x, y) = \mbox{min} \Big\{\frac{\psi(y) f(x)}{\psi(x) f(y)}, 1 \Big\}=  \mbox{min} \{ \exp(x-y), 1\}.
\]
So the moves to the right are possibly rejected, but moves to the left
are always accepted. Indeed, the independent MH chain leaves the sets
$(-\infty, n]$ more and more slowly once it enters
them. Figures~\ref{fig:normtrace} (a) and (b) show the trace plots of
this independent MH chain starting at -5 and -10,
respectively. The plots reveal the slow mixing
of the chain, whereas when started at -10, the chain has failed to
move.

Next, we consider our proposed Algorithm~\ref{alg:mmhmix}. With the
baseline density $f$ same as the independent MH algorithm, we form the
proposal density
$\phi(x) = \cos^2(\epsilon \theta) f(x) + \sin^2(\epsilon \theta)
h(x)$ by perturbing $f$ in the direction of the target density
$g = \psi$. From \eqref{eq:normsqdist} we have
$\langle \sqrt{f}, \sqrt{g} \rangle = \exp(-1/8)$. Thus, $\theta = \cos^{-1}(\exp(-1/8))$ and
\begin{equation}
  \label{eq:normg}
    h(x) = \frac{1}{1-\exp(-1/4)} (\sqrt{g(x)}- \exp(-1/8)\sqrt{f(x)})^2.
\end{equation}
  Now, $h(x)$ can be expressed as either \[ h(x) =
\frac{g(x)}{1-\exp(-1/4)} \Bigg(1-
\exp(-1/8)\sqrt{\frac{f(x)}{g(x)}}\Bigg)^2, \] or \[ h(x) =
\frac{f(x)}{1-\exp(-1/4)} \Bigg(\sqrt{\frac{g(x)}{f(x)}}-
\exp(-1/8)\Bigg)^2.  \] It can be shown that \[ \sup_{x \in (-\infty,
3/4]} \Bigg(1- \exp(-1/8)\sqrt{\frac{f(x)}{g(x)}}\Bigg)^2 = 1 \] and
\[ \sup_{x \in [3/4, \infty)} \Bigg(\sqrt{\frac{g(x)}{f(x)}}-
  \exp(-1/8)\Bigg)^2 = \exp(-1/4).  \] Thus,
$h(x) \le \tilde{M} \tilde{u}(x)$, where
$\tilde{M} = [\Phi(3/4)+\exp(-1/4) \Phi(1/4)]/[1-\exp(-1/4)]$ and the
density
\[ \tilde{u}(x) = \frac{g(x)I(x \le3/4) + f(x)\exp(-1/4)I(x
    >3/4)}{\Phi(3/4)+\exp(-1/4) \Phi(1/4)}.  \] Thus, we can sample
from $h$ by first drawing $X \sim \tilde{u}$ and then accepting it
with probability $h(x)/[\tilde{M} \tilde{u}(x)]$. For sampling from
$h$, the rejection sampler based on $\tilde{u}$ mentioned here is
better than the general method mentioned based on the representation
\eqref{eq:rejh} as, in this example
$M \equiv M_{i, x} = 8.042 > 5.604 = \tilde{M}$. On the other hand,
for $\epsilon =0.5, \sin^2(\epsilon \theta) =0.0588$. Thus, for
sampling from \eqref{eq:mixprop}, $h$ is sampled less than 6\% of the
time.  We run the geometric MH algorithm for 1000 iterations with
$\epsilon =0.5$ using the general method \eqref{eq:rejh} starting at
$-30$. From Figures~\ref{fig:normtrace} (c), we see that, unlike the
independent MH, the proposed manifold MH algorithm successfully
quickly moves to the modal region of the target, starting even at -30.
      
\begin{figure*}
  \begin{minipage}[b]{0.24\linewidth}
    \includegraphics[width=\linewidth]{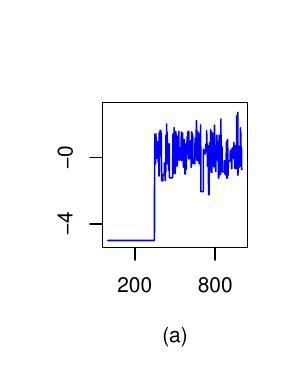}
  \end{minipage}
  \begin{minipage}[b]{0.24\linewidth}
    \includegraphics[width=\linewidth]{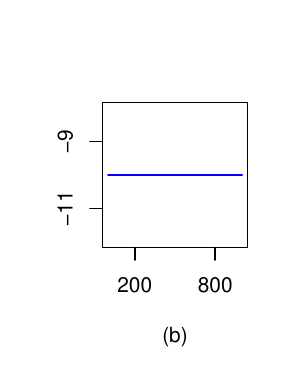}
  \end{minipage}
  \begin{minipage}[t]{0.24\linewidth}
    \includegraphics[width=\linewidth]{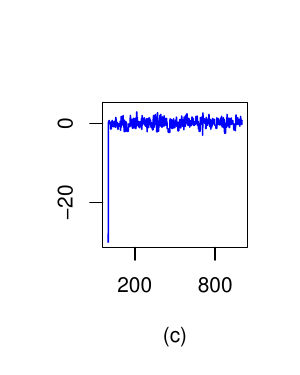}
  \end{minipage}
    \begin{minipage}[b]{0.24\linewidth}
    \includegraphics[width=\linewidth]{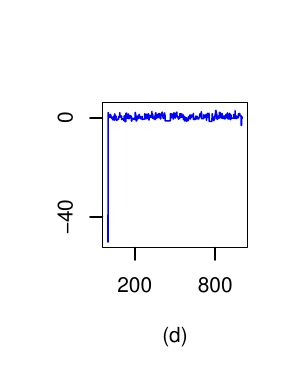}
  \end{minipage}
  \caption{Trace plots for one thousand iterations of the independence
    Metropolis chains (a) starting at -5, (b) starting at -10, and of
    the geometric MH chains (c) with $g = \psi$ starting at -30, and
    (d) $g$ as N $(0, 30^2)$ density starting at -50 for the standard
    normal target. }
\label{fig:normtrace}
\end{figure*}

From \eqref{eq:indmhnorm} and \cite{meng:twee:1996}, we know that the
independent MH chain is not geometrically ergodic. On the other hand,
uniform ergodicity of the geometric MH algorithm follows from the
following Lemma.
\begin{lemma}
  \label{lemm:uenorm}
  The geometric MH chain corresponding to N (1,1) base density is uniformly ergodic.
\end{lemma}
 Finally, we consider an independent
geometric MH algorithm with a non-informative choice of $g$, namely,
the density of N $(0, 30^2)$. Figures~\ref{fig:normtrace} (d) shows
that like the other geometric chain, this also moves to the modal
region in one step.
\end{example}

\begin{example}[Random walk MH for Cauchy (0, 1) target]
\label{sec:exampcauchtar}
  Let the target density $\psi$ be the Cauchy $(0, 1)$
  density. We first consider the RWM chain with the increment $I \sim$ N $(0, 1)$. From
  \cite{jarn:twee:2003} we know this RWM chain is not
  geometrically ergodic. We ran this chain for ten thousand iterations, and the leftmost plot of Figure~\ref{fig:cauchyacf} reveals the slow
  mixing of the chain. A heavier-tailed proposal in RWM for
  the Cauchy target can lead to a polynomial rate of convergence
  \citep{jarn:robe:2007}. We ran the RWM chain with 
  $t_2$ proposal for ten thousand iterations and the second from the left plot of
  Figure~\ref{fig:cauchyacf} shows the high lag autocorrelations of this chain.

Next, we consider our proposed Algorithm~\ref{alg:mmhmix} with
independent base density $f= t_\kappa,$ the $t$ density with degrees
of freedom $\kappa$. With the baseline density $f = t_{\kappa}$,
we form the proposal density $\phi(x) = \cos^2(\epsilon \theta) f(x) +
\sin^2(\epsilon \theta) h(x)$ by perturbing $f$ in the direction of
the target density $g = \psi$. Since \[ h(x) = \frac{1}{1-\langle
\sqrt{f}, \sqrt{g} \rangle^2} \Bigg(\frac{1}{\sqrt{\pi(1+x^2)}}-
\langle \sqrt{f}, \sqrt{g} \rangle \sqrt{\frac{\Gamma(\frac{\kappa
+1}{2})}{\sqrt{\kappa}\Gamma(\frac{\kappa}{2})}} \sqrt{\frac{1+
x^2}{(1 +x^2/\kappa)^{(\kappa+1)/2}}}\Bigg)^2, \]
with degrees of freedom $\kappa =2$, we have
  \begin{equation}
   \label{eq:cauchyh}
  h(x) = \frac{1}{1-\langle \sqrt{f}, \sqrt{g} \rangle^2} \frac{1}{\pi(1+x^2)} \Bigg(1- \langle \sqrt{f}, \sqrt{g} \rangle \Big(\frac{\pi}{2}\Big)^{1/4} 2^{3/4} \sqrt{\frac{1+ x^2}{(2 +x^2)^{3/2}}}\Bigg)^2.
\end{equation}
Now \[ \sup_x \Bigg(1- \langle \sqrt{f}, \sqrt{g} \rangle
\Big(\frac{\pi}{2}\Big)^{1/4} 2^{3/4} \sqrt{\frac{1+ x^2}{(2
    +x^2)^{3/2}}}\Bigg)^2 =1. \]
Thus, we can sample from $h$ by first drawing
$X \sim $ Cauchy (0, 1) and then accepting it with probability
$(1-\langle \sqrt{f}, \sqrt{g} \rangle^2)h(x)/g(x)$. Again, in this
case, $1/(1-\langle \sqrt{f}, \sqrt{g} \rangle^2) = 25.538 < M = M_{i,
x} = 50.077$ implying that the sampler based on the representation
\eqref{eq:rejh} results in lower acceptance rates than the rejection
sampler based on Cauchy $(0, 1)$. On the other hand, with $\epsilon
=0.5, \sin^2(\epsilon \theta) = 0.0099,$ implying that a sample from
$h$ is drawn less than 1\% of the time while sampling from $\phi$. We ran the independent geometric MH algorithm with
$f=t_2$ for 10,000 iterations with $\epsilon =0.5$ using
\eqref{eq:rejh} to sample from $h$. From plot (c) in
Figure~\ref{fig:cauchyacf} we see that the independent geometric MH
algorithm results in much lower autocorrelations than the random walk
MH chains. By
Proposition~\ref{prop:uniergo} we know that the independent MH chain
with $f=t_2$ proposal is not geometrically ergodic as $\min_x
f(x)/\psi(x) =0$. On the other hand, since
\[
  \min_x \frac{\phi(x)}{\psi(x)} = \sin^2(\epsilon\theta) \min_x \Bigg(1- \langle \sqrt{f}, \sqrt{g} \rangle \Big(\frac{\pi}{2}\Big)^{1/4} 2^{3/4} \sqrt{\frac{1+ x^2}{(2 +x^2)^{3/2}}}\Bigg)^2 >0
  \]
  for any $\epsilon>0$, by Proposition~\ref{prop:uniergo}, the
  independent geometric MH algorithm is uniformly ergodic with $t_2$
  baseline density.  Next, we ran the geometric MH algorithm with
  baseline proposal density $f=N(x,1)$ (that is, the normal RWM
  proposal density) for 10,000 iterations with $\epsilon =0.5$ (plot
  (d) in Figure~\ref{fig:cauchyacf}). Unlike the other MH chains,
  autocorrelations of the geometric MH chains drop down to
  (practically) zero by five lags, revealing their fast mixing
  properties. Finally, we consider versions of the two geometric MH
  algorithms used here with $g$ as N (0, 900) density. From the Figure
  S1 given in the supplement, we see that the geometric algorithms
  with this choice of $g$ also result in rapidly declining
  autocorrelations values.

\begin{figure*}
  \begin{minipage}[b]{0.24\linewidth}
    \includegraphics[width=\linewidth]{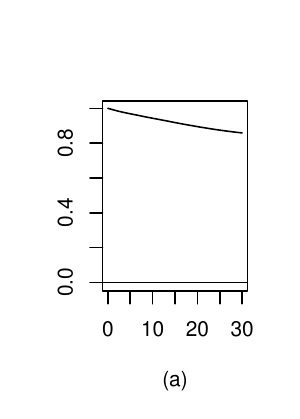}
  \end{minipage}
  \begin{minipage}[b]{0.24\linewidth}
    \includegraphics[width=\linewidth]{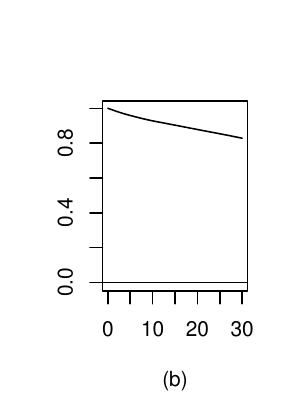}
  \end{minipage}
    \begin{minipage}[b]{0.24\linewidth}
    \includegraphics[width=\linewidth]{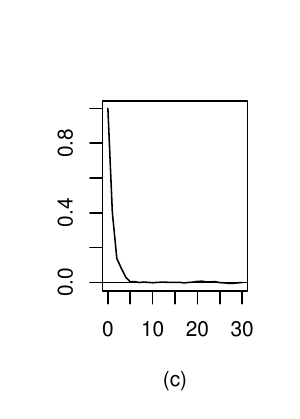}
  \end{minipage}
      \begin{minipage}[b]{0.24\linewidth}
    \includegraphics[width=\linewidth]{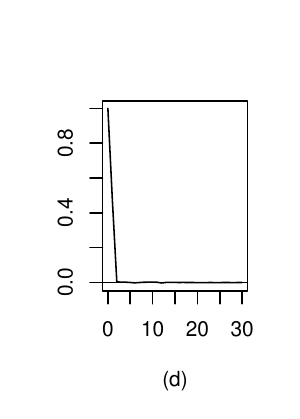}
  \end{minipage}

 \caption{Autocorrelation function plots of the (a) random walk chain with normal proposal, (b) random walk chain with $t_2$ proposal, (c) independent geometric MH chain with $t_2$ baseline density and (d) the geometric MH chain with $N(x,1)$ baseline density where $x$ is the current state of the Markov chain. }
\label{fig:cauchyacf}
\end{figure*}
\end{example}

\begin{example}[Mixture of bivariate normals]
\label{sec:exampmixture}
  Suppose the target density $\psi$ is
  $\psi(x) =0.5 \phi_2(x; \mu_1, \Sigma_1) + 0.5 \phi_2(x; \mu_2,
  \Sigma_2)$, where  $\mu_1=(0,0)^\top$, $\mu_2=(10,10)^\top$,
  $\Sigma_1 = I$ and $\Sigma_2 = 2I$.

  We ran the RWM chain with the normal proposal density with
  covariance matrix $2I$ started at $(5, 5)^\top$ for $100,000$
  iterations. The estimated acceptance rate is 42.14\%. The first row
  in Table~\ref{tab:mixest} provides the estimates of the means of the
  two coordinates, mean square Euclidean jump distance(MSJD),
  effective sample size (ESS) for the two coordinates, the
  multivariate ESS (mESS) and its time normalized value (mESS/sec)
  based on this RWM chain. The MSJD for a Markov chain
  $\{X^{(n)}\}_{n \ge 1}$ is defined as
  $\sum_{i=2}^{n+1} \|X^{(i)} -X^{(i-1)}\|^2/n$. We use the R package
  mcmcse \citep{R:mcmcse} for computing ESS and mESS values. The mean
  estimates are far from the true value $(5, 5)$. The reason is that
  the random chain failed to move out of the local mode at $(0, 0)$
  even after $100,000$ iterations. The left panel of
  Figure~\ref{fig:mixts} shows only the first 1000 steps of the chain.

  We then ran the proposed geometric random walk chain with the same
  baseline density as the random walk MH chain for $100,000$
  iterations initialized at $(5, 5)^\top$. In this case, we took
  $k=2$, $g_1(y) \equiv \phi_2(y; \mu_1, \Sigma_1)$ and
  $g_2 (y) \equiv \phi_2(y; \mu_2, \Sigma_2)$. Thus,
  $\mathcal{G} = \{g_1, g_2\}$ is a set of two unimodal densities
  centered at the different local modes. Note that since
  $f(y|x) = \phi_2(y ; x, 2I)$,
  $\langle \sqrt{f}, \sqrt{g_i} \rangle, i=1, 2$ are available in
  closed form. From the middle panel of Figure~\ref{fig:mixts}, which
  shows the first 1000 steps of the geometric chain, we see that by
  combining the localized steps of the RW with the global moves, the
  geometric MCMC chain can successfully move back and forth between
  the two modes and simultaneously explore the modal regions. Table~\ref{tab:mixest} shows that the geometric chain
  (GMC1(RW)) results in higher ESS values and much higher MSJD,
  demonstrating better mixing and successful reduction of the RW behavior.

  Next, we consider the geometric random walk chain with the same
  baseline density $f$ as the RWM chain but $k=1$ and
  $g=\psi$. In this case, $\langle \sqrt{f}, \sqrt{g} \rangle$ is not available in
  closed form and needs to be estimated. We estimate it by importance
  sampling using samples from $f$. The results from $100,000$
  iterations of this chain (GMC2(RW)) initialized at $(5, 5)^\top$ are
  given in Table~\ref{tab:mixest}. From
  these results, we see that GMC2(RW) performs better than the
  GMC1(RW) chain.

  Finally, we ran a geometric Markov chain (GMC3(RW)) with the same base
random walk proposal density $\phi_2(y;x,2I),$ but with a
`non-informative' choice of $g$, namely $g(y)=\phi_2(y;0,30^2I)$, a
normal density centered at $0$ with covariance matrix $900I$. From
Figure~\ref{fig:mixts}, we see that even with this choice of a diffuse
density for $g$, the geometric MH chain successfully moves between the
two local modes, resulting in estimates of the means close to their
true values. On the other hand, unlike GMC1 and GMC2, GMC3 takes more
time to move between the modes, resulting in lower MSJD values than
GMC1 and GMC2. Note that the mESS for a Markov chain sample of size
$n$ is defined as $n \sqrt{|\Lambda|/|\Sigma|}$ where $\Lambda$ is the
sample covariance matrix and $\Sigma$ is an estimate of the Monte
Carlo covariance matrix. Since the RW chain is stuck in a local mode,
it is fooled into treating the target distribution as unimodal, and it
results in small values of $|\Lambda|$ and $|\Sigma|$ leading to a
higher value of mESS than that of GMC3 \citep{roy:2020}.
\begin{figure*}
    \begin{minipage}[b]{0.24\linewidth}
    \includegraphics[width=\linewidth]{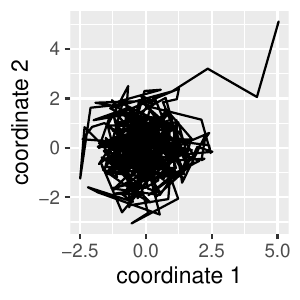}
  \end{minipage}
  \begin{minipage}[b]{0.24\linewidth}
    \includegraphics[width=\linewidth]{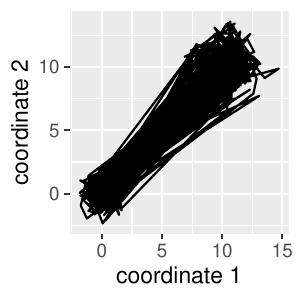}
  \end{minipage}
    \begin{minipage}[b]{0.24\linewidth}
    \includegraphics[width=\linewidth]{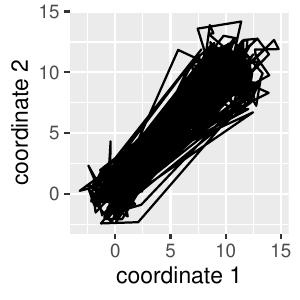}
  \end{minipage}
      \begin{minipage}[b]{0.24\linewidth}
    \includegraphics[width=\linewidth]{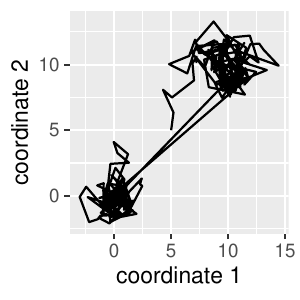}
  \end{minipage}

  \caption{Illustration of trajectories for a mixture of two
    normals. The trajectory of the first 1000 iterations of different chains. The random walk chain (left plot) entirely lies
    in one mode, whereas the first geometric MH chain (second plot from the left), the
    second geometric MH chain (third plot from the left) and the
    third geometric MH chain (rightmost plot) with $N_2(x,2I)$ baseline density successfully move
    between the modes where $x$ is the current state of the Markov
    chain.}
 \label{fig:mixts}
  \end{figure*}
Next, we compare the three chains in terms of
  computational time. For completing $100,000$ iterations, the RW
  chain took around 10 seconds on an old Intel i7-870 2.93 GHz machine
  running Windows 10 with 16 GB RAM, whereas the GMC1(RW), GMC2(RW) and GMC3(RW)
  chains took a little less than two minutes, about 20 minutes, and about one minute,
  respectively. The extra time required in GMC2(RW) is due to
  computation of $\langle \sqrt{f}, \sqrt{g} \rangle$ by importance sampling in
  every iteration of the Markov chain.
  \begin{center}
\begin{table*}[h]
  \caption{Results for different samplers for the mixture model example}
  \centering
\begin{tabular}{lcccccc}
\hline\hline  Sampler  & $E(X_1)$ & $E(X_2)$ & MSJD & mESS &  mESS/sec &ESS \\
 \hline
  RW& -0.004& -0.004&0.916 & 11911 & 964&(12224,11437)  \\
  GMC1(RW)&\phantom{-} 6.729 &\phantom{-} 6.721 &29.586 & 17520& 156&(15132,15549)\\
  GMC2(RW)&\phantom{-} 5.025 &\phantom{-} 5.035 &33.352 & 23740 & 17&(20371,18756)\\
  GMC3(RW)&\phantom{-} 4.675 &\phantom{-} 4.705 &0.957 & 1152 &15 &(16578,17133)\\
\hline
\end{tabular}
\label{tab:mixest}
\end{table*}
\end{center}
\end{example}
\begin{example}[A target distribution with six modes]
  \label{sec:exampsixmode}
The following target density \eqref{eq:six_tar} is from \cite{lem:chen:lavi:2009} 
\begin{equation}\label{eq:six_tar}
\psi(x_1, x_2)\propto\exp\bigg(\frac{-x_1^2}{2}\bigg)\exp\bigg(-\frac{((csc~ x_2)^5 - x_1)^2}{2}\bigg),\; -10 \le x_1, x_2 \le 10.
\end{equation}
The marginal densities of $X_1$ and
$X_2$, known up to a normalizing constant, given in
Figure~\ref{fig:six_marg}, clearly show that the target distribution
has six well separated modes.
\begin{figure*}
    \begin{minipage}[b]{0.5\linewidth}
    \includegraphics[width=\linewidth]{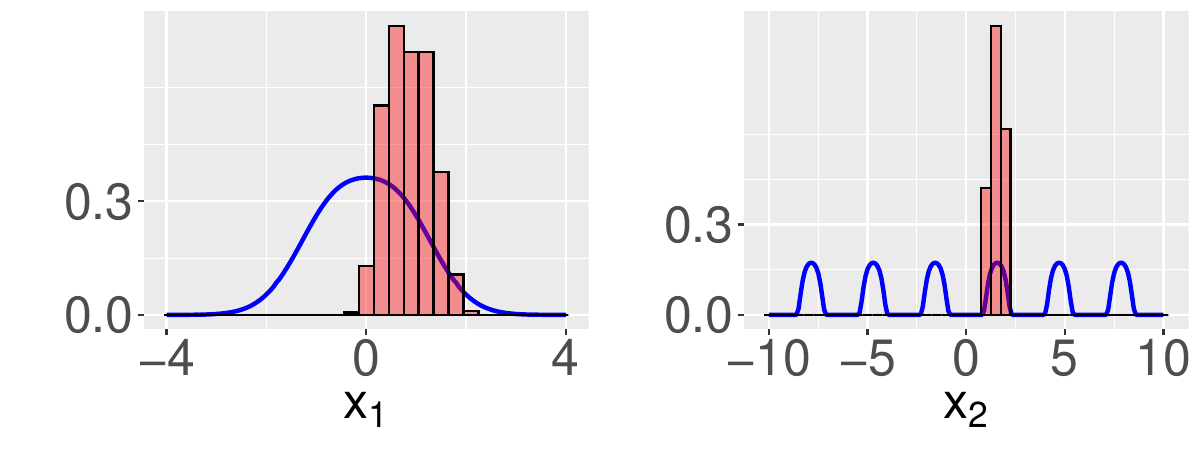}
  \end{minipage}
      \begin{minipage}[b]{0.5\linewidth}
    \includegraphics[width=\linewidth]{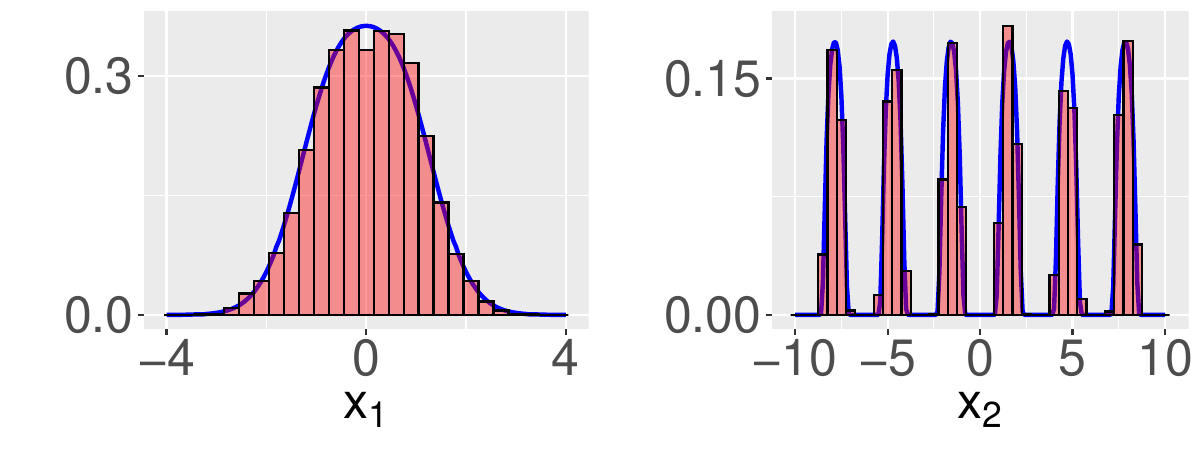}
  \end{minipage}
  \caption{Desired and attained marginals from 100,000 iterations of two MCMC samplers for the six-mode target example.
    The first and second plots from the left show marginal histograms from the RW-within-Gibbs sampler demonstrating that the chain entirely lies in one mode. Whereas the histograms (the third and fourth plots from the left) obtained from
    the geometric RW-within-Gibbs sampler show that the chain successfully moves between the modes.}
 \label{fig:six_marg}
  \end{figure*}
  We first consider a RW-within-Gibbs sampler. In particular,
  following \cite{lem:chen:lavi:2009}, this sampler iteratively
  samples from $\psi(x_1|x_2)$ and
  $\psi(x_2|x_1)$ using a RWM algorithm with the normal proposal with
  standard deviation
  $0.1$. We ran this chain, denoted by RW-w-Gibbs, for 100,000
  iterations starting at $(0.05, 1.5)$. From Figure~\ref{fig:six_marg} we see that the chain is
  trapped in one of the modes, and has failed to converge to the
  target density \eqref{eq:six_tar}.

  Next, we ran another algorithm where iteratively sampling from the
  conditional densities $\psi(x_1|x_2)$ and
  $\psi(x_2|x_1)$ are done using the proposed geometric RW
  sampler. That is, in every iteration of the Gibbs sampler the two
  RWM steps for the conditional densities are replaced by one
  transition from their corresponding GMC (RW) steps with the same baseline RW
  densities. While the base densities are the same as the proposals in
  RW-w-Gibbs, a `non-informative' choice, namely a normal density with
  mean zero and variance 900, is made for the
  $g$ density. This sampler, denoted by GMC(RW)-w-Gibbs, was also run
  for 100,000 iterations starting at $(0.05,
  1.5)$. From Figure~\ref{fig:six_marg} we see that the geometric
  MH-within-Gibbs algorithm has successfully accomplished the two
  important tasks: find all the modes, and move between them. Indeed
  as in Example~\ref{sec:exampmixture}, the `global' moves of the
  geometric proposal complements the localized steps of the RW by
  moving between the modes.

  The better mixing of the GMC(RW)-w-Gibbs chain is corroborated by
  much higher MSJD values given in Table~\ref{tab:sixest}. The
  geometric sampler results in estimates of the marginal means closer
  to the true values, whereas the estimates obtained from the
  RW-w-Gibbs chain are far off. As mentioned in the previous example,
  the ESS estimates for the RW-w-Gibbs chain can't be
  trusted. Finally, we consider an algorithm where only the RW step
  for $\psi(x_2|x_1)$ is replaced by its geometric variant mentioned
  above. From the histogram given in Figure S2 in the supplement, we
  see that this chain, denoted by GMC($X_2$)-w-Gibbs, also
  successfully moves between the modes although compared to when
  geometric sampler is used to sample from $\psi(x_1|x_2)$,
  GMC($X_2$)-w-Gibbs results in poorer estimate of the marginal
  density of $X_1$. We also ran the Markov chain obtained by replacing
  only the RW step for $\psi(x_1|x_2)$ with its geometric invariant. This chain 
  remained stuck (not shown here) in a local mode for the whole 100,000
  iterations. 
  
    \begin{center}
\begin{table*}[h]
  \caption{Results for different samplers for the six-modal target example}
  \centering
\begin{tabular}{lcccccc}
\hline\hline  Sampler  & $E(X_1)$ & $E(X_2)$ & MSJD & mESS &mESS/sec & ESS \\
 \hline
  RW-w-Gibbs& \phantom{-}0.681&\phantom{-} 1.569&0.017 & 749&68&(530,1740)  \\
    GMC(RW)-w-Gibbs&-0.172 &\phantom{-}-0.027 &1.450 & 337&5 &(90,973)\\
  GMC($X_2$)-w-Gibbs&\phantom{-}0.013 &0.033 &1.469 & 756&6 &(373, 998)\\
\hline
\end{tabular}
\label{tab:sixest}
\end{table*}
\end{center}

\end{example}
\begin{example}[Bayesian logistic model]
\label{sec:examplogistic}
  We consider the Binary logistic regression model. Suppose
  $z = (z_1, z_2, \dots, z_m)$ are $m$ independent observations where
  $z_i$ is either $0$ or $1$ and assume that
  $ Z_i \sim \mbox{Ber}(\xi_i)\ ,$ with
  $\log(\xi_i/[1-\xi_i]) = w_i^\top \beta $ where $w_i$'s, $i = 1,2,\dots,m$
  are the $p \times 1$ covariate vectors and $\beta$ is the
  $p \times 1$ vector of regression coefficients. We consider the
  $p$-variate normal prior $N_p(\mu_0, \Sigma_0)$ with mean $\mu_0$
  and covariance matrix $\Sigma_0$. Thus, the target posterior density is
  $\psi(\beta) \propto \bigg[\prod^m_{i=1} \frac{\exp(z_iw^\top_i\beta)}{1+ \exp(w^\top_i\beta)}\bigg] \times \phi_p(\beta ; \mu_0, \Sigma_0).$

  We consider the Pima Indian data set \citep{ripl:1996}.  A
  population of women who were at least 21 years old of Pima Indian
  heritage, and living near Phoenix, Arizona, was tested for diabetes,
  according to World Health Organization criteria. The data were
  collected by the US National Institute of Diabetes and Digestive and
  Kidney Diseases. We used the $m=532$ complete records selected from a
  larger data set, with the binary observation denoting the presence
  or absence of diabetes, and $p=8$ covariates consisting of an
  intercept term and the following seven predictors: the number of
  pregnancies, plasma glucose concentration in an oral glucose
  tolerance test, diastolic blood pressure (mm Hg), triceps skin fold
  thickness (mm), body mass index (weight in
  kg/$\mbox{(height in m)}^2$), diabetes pedigree function, and age
  (in years). In the supplement we consider the German Credit dataset
  for which $p=21$ and $m=1000$.

  We analyze the Pima Indian data set by fitting a Bayesian logistic
  regression model using the RWM, independent MH algorithms, the MALA, the MMALA, and
  their geometric MH variants.
  We take $\mu_0 = 0_8$, a vector of zeros, and $\Sigma_0 = 10^3 I_8$
  for the prior density of $\beta$.  We ran the algorithms for 100,000
  iterations all started at $\mu_0$.  Let $\hat{\beta}$ be the MLE of
  $\beta$. We obtain $\hat{\beta}$ by fitting the glm function of R with the logit link.  Let
  $\hat{\Sigma} = (-\nabla^2 \log
  \psi(\beta))^{-1}|_{\beta=\hat{\beta}}$, the generalized observed
  Fisher information matrix. For the RWM, we use the normal
  proposal with covariance matrix $\Sigma_f=0.3 \hat{\Sigma}$ to get
  an acceptance rate of around 50\%.  For the independent MH, we use a
  normal proposal with mean $\hat{\beta}$ and covariance matrix
  $\hat{\Sigma}$ resulting in an acceptance rate of around 80\%. For
  constructing MALA and MMALA, we need the first and higher-order
  derivatives of the log target density $\psi$. 
  Indeed, the proposal
  density of MALA is $\phi(\beta'; \beta+h\nabla\log \psi(\beta)/2, hI_p)$ for some step-size $h$ and
  $\beta$ is the current state of the Markov chain.  It turns out that
  $\nabla \log \psi(\beta) = W^{\top}(z - \xi)- \Sigma_0^{-1}(\beta
  -\mu_0), \nabla^2 \log \psi(\beta) = -W^{\top}\Lambda W -
  \Sigma_0^{-1}$ and
  $ \frac{\partial \nabla^2 \log \psi(\beta)}{\partial \beta_j} =
  -W^{\top}\Gamma^j X$, where $W$ is the $m \times p$ matrix of
  covariates, $\xi = (\xi_1,\dots,\xi_m)$, 
  $\Lambda$ is the $m\times m$ diagonal matrix with $i$th diagonal
  element $\xi_i(1- \xi_i), i=1,\dots,m$, and $\Gamma^j$ is the
  $m\times m$ diagonal matrix with $i$th diagonal element
  $\xi_i(1- \xi_i)(1- 2\xi_i)w_{ij}, i=1,\dots,m$. For MALA and MMALA
  we take $h=0.01$ and $h= 2$, respectively resulting in around 50\%
  acceptance rates.

  For the geometric variants of the RW and independent
  MH algorithms mentioned above, we take $k=1$ and
  $g(\beta) = \phi_p(\beta; \hat{\beta}, \hat{\Sigma})$.  For the
  independent geometric chain (GMC(Ind)), we first consider
  $f(\beta)= \phi(\beta; \mu_0, \Sigma_f)$, although later we discuss
  another choice. In this case,
  $\langle \sqrt{f}, \sqrt{g} \rangle = \langle \sqrt{\phi(\beta;
    \mu_0, \Sigma_f)}, \sqrt{g} \rangle \approx \pi/2$, and thus
  $M \equiv M_{i, \beta} \approx 1$ where $M_{i, \beta}$ is defined in
  Section~\ref{sec:samph}. On the other hand, for the RW geometric
  chain (GMC(RW)), $\langle \sqrt{f}, \sqrt{g} \rangle$ needs to be
  computed in every iteration. Since the chains are started at
  $\mu_0$, for the first iteration, the value of
  $\langle \sqrt{f}, \sqrt{g} \rangle$ will be the same as that of the
  independent geometric chain. If the mean of $f$ is $ \hat{\beta}$,
  then
  $\langle \sqrt{\phi(\beta; \hat{\beta}, \Sigma_f)}, \sqrt{g} \rangle
  = 1.042$, and in that case $M =1.682$ giving around 60\% acceptance
  probability for the sampling algorithm described in
  Section~\ref{sec:samph}. Similarly, we consider geometric variants
  of MALA (GMC(MALA)) and MMALA (GMC(MMALA)). Table~\ref{tab:pimaess}
  provides ESS, mESS, their time normalized values, and MSJD values
  for the different samplers. For ESS, we provide the minimum, median,
  and maximum of the eight values corresponding to $p=8$
  covariates. Table~\ref{tab:pimaacf} shows the first eight
  autocorrelations for the function $\beta^\top W^\top W \beta$ for
  the different Markov chains. The function $\beta^\top W \beta$ is a
  natural choice as it is used as the drift function to prove the
  geometric ergodicity of some Gibbs samplers for some Bayesian binary
  regression models \cite[see e.g][]{roy:hobe:2007}. The numerical
  results corroborate Theorems~\ref{thm:orde}-\ref{thm:peskmmh} as
  the geometric MCMC algorithms lead to better performance over the
  base Markov chains. Note that the base kernel of GMC(Ind) is not the
  same as the proposal density of the independent MH (Ind). The
  improvement obtained by the geometric perturbation over an uninformed base
  kernel like RW is much higher than that over a more informed kernel
  like MALA and MMALA. GMC(MMALA)
  has superior performance over the other algorithms in terms of ESS
  and autocorrelation, closely followed by the independent MH
  algorithm. On the other hand, in terms of the time-normalized
  values, the independent MH algorithm with the proposal $g$ beats all
  other algorithms. Also, the time-normalized ESS values of the
  geometric MCMC algorithms based on simpler base kernels like RW are
  about three times higher than those of manifold MALA chains, and this
  domination is even bigger (about twenty times) for the German Credit
  dataset with more ($21$) covariates. 
  Next, we consider the independent MH algorithm with a proposal
  $f_1(\beta) = \phi_p(\beta; \mu_0, \num{e-6}I_8)$. This algorithm
  (Ind ($f_1$)) results in small ESS and MSJD values and large
  autocorrelations. Thus, the independent MH algorithm's performance
  suffers greatly with the proposal density change. On the other hand,
  the performance of the GMC (Ind) does not vary much when
  $f(\beta)= \phi(\beta; \mu_0, \Sigma_f)$ is replaced with
  $f_1(\beta)$. 
  For the geometric variants of the algorithms, we did not choose the
  step size or the variance of the baseline densities, optimizing
  their empirical performance; we used $\epsilon =0.5$ and other
  values the same as the non-geometric chains. By changing $\epsilon$,
  geometric algorithms' acceptance rate and performance can greatly
  vary. For example, the GMC(RW) chain in Table~\ref{tab:pimaess} with
  $\epsilon =0.5$ has 62\% acceptance rate, whereas for
  $\epsilon =0.1$ and $0.9$, it results in 45\% and 83\% acceptance
  rates, respectively. From Tables S1-S2 given in the supplement, we
  see that the comparative performance of the 10 MCMC samplers remains
  similar for the German Credit data set. However, as mentioned
  before, some improvements obtained by the geometric MCMC algorithms
  are even larger for the German Credit data set.
  
  \begin{center}
\begin{table*}[h]
  \caption{Multivariate ESS and ESS (minimum, median, maximum) and their time normalized values for different samplers for the Pima Indian data}
  \centering
  \begin{tabular}{cccccc}
\hline\hline  Sampler  & mESS  & ESS & mESS/sec  & ESS/sec  &MSJD \\
 \hline
  RW&2765&(2405,2833,3155) &45&(39,46,51) &0.019  \\
    GMC(RW)&22460&(18094,21210,23873) &158&(127,149,167) &0.123\\
  Ind&56782&(42972,59874,62279) &745 &(564,786,818)& 0.263 \\
  GMC(Ind)& 21405&(19266,21078,23126)&180&(162,177,195)&0.131 \\
  Ind($f_1$)&77 &(32,72,83)&0.97&(0,1,1) &2.7e-7\\
  GMC(Ind ($f_1$))& 21406&(19263,21078,23125)&174&(157,172,188)&0.131\\
  MALA &10948 &(4953,7473,12245) &120 &(54,82,134)&0.033\\
  GMC(MALA)&28386 &(20963,22638,26727)&101 &(74,80,95)&0.133\\
  MMALA & 35032 &(33070,34381,36491) &57 &(54,56,59)&0.136\\
  GMC(MMALA)&58160 &(51354,59985,61494)&15&(13,15,16)&0.248\\
\hline
\end{tabular}
\label{tab:pimaess}
\end{table*}
\end{center}

  \begin{center}
\begin{table*}[h]
  \caption{First eight autocorrelations for different samplers for the Pima Indian data}
  \centering
\begin{tabular}{ccccccccc}
\hline\hline  Sampler  & lag1  & lag2 & lag3  & lag4 &lag5 &lag6 &lag7 &lag8\\
 \hline
  RW&0.941& 0.886 &0.835 &0.787 &0.742 &0.699 &0.658& 0.620  \\
    GMC(RW)&0.663 &0.452 & 0.314& 0.223& 0.162& 0.122& 0.096& 0.078\\
  Ind&0.313& 0.143& 0.082& 0.059& 0.044& 0.037& 0.028& 0.021\\
  GMC(Ind)&0.657& 0.446& 0.310& 0.221& 0.164& 0.126& 0.098& 0.079 \\
  Ind($f_1$)& 0.997& 0.995& 0.993& 0.991& 0.989& 0.987& 0.985& 0.984\\
  GMC(Ind ($f_1$))& 0.657& 0.446& 0.310& 0.221& 0.164& 0.126& 0.098& 0.079\\
  MALA & 0.784& 0.630& 0.514& 0.426& 0.356& 0.300& 0.252& 0.209\\
  GMC(MALA)&0.603 & 0.379 &0.243& 0.161& 0.108& 0.073& 0.051& 0.038\\
  MMALA &0.440& 0.238& 0.142& 0.097& 0.068& 0.046& 0.032& 0.021\\
  GMC(MMALA)&0.342& 0.130& 0.053 &0.0267& 0.018& 0.016& 0.008& 0.006\\
\hline
\end{tabular}
\label{tab:pimaacf}
\end{table*}
\end{center}  
\end{example}

\begin{example}[Bayesian spatial GLMM]
  \label{sec:exampsglmm}  In spatial GLMMs, conditional on the underlying latent spatial
  Gaussian process $\{S(l), l \in \mathcal{L}\}$, the observations
  $z=(z(l_1), \dots, z(l_m))$ at observed locations
  $(l_1, \dots, l_m)$ are assumed to be independent random variables
  following the exponential family with their (conditional) means related
  to $S(l)$'s via a link function. Let $S_i \equiv S(l_i)$ denotes the
  value of the underlying process at location $l_i, i=1,\dots,m$ and
  $S=(S_1,S_2,\dots,S_m)^{\top}$. Denoting $z(l_i)$ by $z_i$, for
  analyzing spatial count data, we assume that conditional on $S$,
  $z_i | s_i \stackrel{\mathrm{ind}}{\sim} \mathrm{Poisson}
  (t_i\lambda_i)$ where $\lambda_i$ is the rate parameter, the link
  function is $s_i=\log(\lambda_i)$ and $t_i$ is some `exposure'
  variable to the event, for example, the number of hours of
  operation, or the area for the $i$th location, $i=1,\dots,m$. Assume
  that $S$ has the mean $X\beta$, where $X$ is the $m\times p$ covariates
  matrix and $\beta\in\mathbb{R}^p$ is the regression parameter. Also,
  let the covariance matrix $\Sigma$ of $S$ be formed from an
  exponential correlation function that is,
  $\text{cov}(S(l),S(l'))=\sigma^2 \exp\{-\theta\| l-l'\|+\omega \}$,
  where $\sigma^2, \theta, \omega$ are the partial sill, range and
  relative nugget parameters, respectively, and $\|l-l'\|$ is the
  Euclidean distance between $l$ and $l'$. There are other choices of
  parametric correlation functions available in the literature
  \citep{digg:ribe:chri:2003}.

  We consider a Bayesian analysis and assume the normal prior on
  $\beta$ with mean $\mu_0$ and covariance matrix
  $\sigma^2\Sigma_{0}$, that is, apriori
  $\beta|\sigma^2 \sim N_p (\mu_0, \sigma^2 \Sigma_0)$. The parameters
  $(\sigma^2, \theta, \omega)$ are assumed apriori independent with
  $\sigma^2 \sim \text{Inverse Gamma}(\alpha_1,\gamma_1), \theta \sim
  \text{Inverse Gamma}(\alpha_2,\gamma_2),$ and
  $\omega\sim \text{Inverse Gamma}(\alpha_3,\gamma_3)$ for some known
  hyper parameter values of $\alpha_i$ and $\gamma_i$, $i=1,2,3$. We
  run different MH-within-Gibbs algorithms to sample from the
  posterior density $\psi(S,\beta,\sigma^2,\theta,\omega|z)$. The
  joint density $\psi(S,\beta,\sigma^2,\theta,\omega|z)$ and its
  conditionals are derived in the supplement. From these derivations,
  we see that the full conditional densities of $\beta$ and $\sigma^2$
  are normal and inverse gamma densities, respectively. Conversely,
  the conditional densities of $S$, $\theta$, and $\omega$ are not
  standard. Among these, the density of $S$ is high-dimensional, with
  the dimension being the same as the number of observations, $m$. We
  use RWM steps for sampling from the conditional densities of
  $\log(\theta)$ and $\log(\omega)$. For sampling from the conditional
  density of $S$, we consider various MH algorithms, particularly RWM,
  MALA, and MMALA, and their geometric variants. The supplement
  provides the first and higher-order derivatives of the logarithm of
  the conditional density of $S$ required for constructing the MALA
  and MMALA. Also, the order in which different variables are sampled
  in the Gibbs samplers are given in the supplement.

  We consider the Rongelap island dataset, which consists of the
  measurements of $\gamma$-ray counts $z_i$ observed during $t_i$
  seconds at $i$th coordinate on the Rongelap island, $i=1,\ldots,m$,
  with $m=157$. This data set was analyzed by
  \cite{digg:tawn:moye:1998, evan:roy:2019} among others, using a
  Poisson spatial model. For this dataset, $p=1$ and we assume
  $\mu_0 = 0$, and $\Sigma_0 = 100$.  For the priors on
  $\sigma^2, \theta$ and $\omega$, we assume
  $\alpha_1=\alpha_2=\alpha_3=2.04$ and $\beta_1=\beta_2=\beta_3=2.08$
  setting their prior means and variances at 2 and 100, respectively.

  For the RWM for the conditional densities of $\log(\theta)$ and
  $\log(\omega)$, we use normal proposals with variances 0.04 and 1,
  respectively to get acceptance rates of around 50\%. For sampling
  from the conditional density of $S$ using RWM we use the normal
  proposal with covariance matrix $3*10^{-5} I_{157}$ and for MALA and
  MMALA we take $h=0.0001$ and $h= 0.5$, respectively resulting in
  around 50\% acceptance rates. We ran the algorithms for 100,000
  iterations all started at
  $s=\log(z), \beta=\mu_0, \sigma^2=\theta=\omega=2$ on a Linux server
  equipped with 128 AMD EPYC 7542 CPU cores and 1 TB of RAM.

  For the geometric variants of the algorithms mentioned above, for
  sampling from the conditional density of $S$, we take $k=1$ and
  $g(s) = \phi_{157}(s; s_{\text{mean}}, G)$, where $s_{\text{mean}}$
  is the estimate of the posterior mean of $S$ obtained from the above
  mentioned MH-within-Gibbs sampler with the RWM step for $S$ and
  $G\equiv G_{\sigma^2,\theta,\omega}= (-\nabla^2 \log
  \psi(s|\beta,\sigma^2,\theta,\omega,z))^{-1}|_{s=s_{\text{mean}}}$
  evaluated at the values of $\sigma^2,\theta,$ and $\omega$ in that
  iteration. Note that $G$ and hence the density $g$ change in every
  iteration, unlike in the previous examples.

  Table~\ref{tab:rongess} provides ESS, mESS, their time normalized
  values, and MSJD values for $S$ based on the different samplers. For
  ESS, we provide the minimum, median, and maximum of the 157 values
  corresponding to $m=157$ coordinates of $S$. Table~\ref{tab:rongacf}
  shows the first eight autocorrelations for the function $S^\top S$
  for the different Markov chains. From
  Tables~\ref{tab:rongess}-\ref{tab:rongacf} we see that GMC(MMALA)
  has superior performance over the other algorithms in terms of
  autocorrelation, MSJD, median ESS and time normalized ESS, mESS
  values. As in Example~\ref{sec:examplogistic}, the improvement of
  the geometric algorithm over an uninformed base kernel like RW is
  much higher than that over a more informed kernel like MALA and
  MMALA. Also, as in Example~\ref{sec:examplogistic}, for the
  geometric variants of the algorithms, we used $\epsilon =0.5$, and
  other values remained the same as the non-geometric chains.
  
  \begin{center}
\begin{table*}[h]
  \caption{Multivariate ESS and ESS (minimum, median, maximum) and their time normalized values for different samplers for the Rongelap island data}
  \centering
  \begin{tabular}{cccccc}
\hline\hline  Sampler  & mESS  & ESS & mESS/sec  & ESS/sec  &MSJD \\
 \hline
  RW&651&(7,110,650) &0.157&(0.002,0.027,0.157) &0.000  \\
    GMC(RW)&30077&(21535,24487,27973) &5.084&(3.640,4.139,4.729) &0.067\\
  MALA &11573 &(98,4055,49329) &2.414 &(0.020,0.846,10.290)&0.010\\
  GMC(MALA)&35927 &(23312,27997,61011)&5.580 &(3.621,4.349,9.476)&0.072\\
  MMALA & 9952&(6581,8207,11121)&2.021 &(1.337,1.667,2.259)&0.026 \\
  GMC(MMALA)&34640 &(25600,29244,34472)&7.065&(5.221,5.964,7.030)&0.075\\
\hline
\end{tabular}
\label{tab:rongess}
\end{table*}
\end{center}

    \begin{center}
\begin{table*}[h]
  \caption{First eight autocorrelations for different samplers for the Rongelap island data}
  \centering
\begin{tabular}{ccccccccc}
\hline\hline  Sampler  & lag1  & lag2 & lag3  & lag4 &lag5 &lag6 &lag7 &lag8\\
 \hline
  RW&0.998 &0.996 &0.994 &0.992 &0.990 &0.989 &0.987 &0.985\\
    GMC(RW)&0.586 &0.349 &0.212 &0.129 &0.088 &0.061 &0.038&0.023\\
  MALA & 0.926 &0.862 &0.803 &0.751 &0.704 &0.659 &0.617 &0.580\\
  GMC(MALA)&0.553 &0.312 &0.177 &0.104 &0.066 &0.042 &0.026 &0.016\\
  MMALA &0.845 &0.716 &0.608 &0.517 &0.439 &0.374 &0.321 &0.276\\
  GMC(MMALA)&0.541&0.300 & 0.169&  0.097&  0.060&  0.037 & 0.023 & 0.014\\
\hline
\end{tabular}
\label{tab:rongacf}
\end{table*}
\end{center}  
\end{example}

\begin{example}[Bayesian variable selection]
\label{sec:exampvarsel}
  We now consider the so-called variable selection problem, where we
  have a $m \times 1$ vector of response values
  $z=(z_1, \ldots, z_m)$, a $m \times p$ design matrix
  $W = (W_1, \ldots, W_p)$ with each column of $W$ representing a
  potential predictor and the goal is to identify the set of all
  important covariates which have non-negligible effects on the
  response $z$. In a typical GWAS, the number
  of markers, $p$ far exceeds the number of observations $m$, although
  only a few of these variables are believed to be associated with the
  response. Here, we consider a popular approach to variable selection
  \citep{mitc:beau:1988,geor:mccu:1993, geor:mccu:1997, nari:he:2014,
    li:dutt:roy:2023} based on a Bayesian hierarchical model mentioned
  below.

  Let $\beta$ denote the $p$
  dimensional vector of regression coefficients, $\gamma$ denote a subset of $\{1,2,\dots,p\}$ and the cardinality of
  $\gamma$ be denoted by $|\gamma|$. Corresponding to a given model
  $\gamma$, let $W_{\gamma}$ denote
  the $n \times |\gamma|$ sub-matrix of $W$ and $\beta_{\gamma}$
  denotes the $|\gamma|$ dimensional sub-vector of $\beta$ . Let $1_{m}$ denotes a $m$-vector of 1's. The Bayesian hierarchical
regression model we consider is given by
\begin{subequations}\label{eq:litdiff}
\begin{align}
    z |\beta,\beta_0,\sigma^2, \gamma &\sim N_{m}\left(1_{m} \beta_{0}+W_{\gamma} \beta_{\gamma}, \sigma^{2} I\right), \label{subeq:regModel}\\
  \beta_{j} | \beta_0, \sigma^2, \gamma &\stackrel{\text { ind }} \sim N\left(0, \frac{\gamma_{j}}{\lambda} \sigma^{2}\right) \text { for } j=1, \ldots, p,\label{subeq:priorBeta}\\
  \left(\beta_{0}, \sigma^2\right)|\gamma &\sim 1 / \sigma^{2}, \label{subeq:priorInterceptVariance}\\
                                                \gamma|\omega &\sim \omega^{|\gamma|}(1-\omega)^{p-|\gamma|}\label{subeq:priorGamma}.
\end{align}
\end{subequations}
In the model  \eqref{eq:litdiff},  \eqref{subeq:regModel} indicates that conditional on the
parameters, each $\gamma$ corresponds to a Gaussian linear regression
model $z = \beta_01_m + W_\gamma\beta_\gamma + \epsilon$ where the
residual vector $\epsilon \sim N_{m}(0,\sigma^2 I).$ Given $\gamma$, a
popular non-informative prior is set for $(\beta_0, \sigma^2)$ in
\eqref{subeq:priorInterceptVariance} and a conjugate independent
normal prior is used on $\beta$ in
\eqref{subeq:priorBeta} with the common parameter $\lambda > 0$ 
controlling the precision of the prior. Following the common practice,
here, we assume that the covariate matrix $W$ is scaled. Note that, if
a covariate is not included in the model, the prior on the
corresponding regression coefficient degenerates at zero. The prior of
$\gamma$ in \eqref{subeq:priorGamma} is obtained by assuming
independent Bernoulli distribution for each indicator variable
indicating the presence or absence of variables corresponding to the
model and $\omega \in (0, 1)$ is the prior inclusion probability of
each predictor. The hyperparameters $\lambda$ and $\omega$ are
assumed known \citep[see][for appropriate choices of these
parameters.]{nari:he:2014, li:dutt:roy:2023}.

  It is possible to analytically integrate out $\beta_0,$
  $\beta_\gamma$ and $\sigma^2$ from the hierarchical model
  \eqref{eq:litdiff}, and the marginal posterior pmf of $\gamma$ is
  given by
  \begin{equation}
    \label{eq:marginalGamma}
    \psi(\gamma | z) \propto \lambda^{|\gamma|/2} |A_\gamma|^{-1/2} R_\gamma^{-(m-1)/2} \omega^{|\gamma|}(1-\omega)^{p-|\gamma|},
  \end{equation}
  where $A_\gamma = W_{\gamma}^{\top} W_{\gamma}+\lambda I,$
  $|A_\gamma|$ is the determinant of
  $A_\gamma, R_{\gamma} = \tilde{z}^{\top}
  \tilde{z}-\tilde{z}^{\top}W_{\gamma}A_\gamma^{-1}W_\gamma^\top\tilde{z}$
  is the ridge residual sum of squares, and $\tilde{z}=z-\bar{z} 1_{m}$ with $\bar{z} = \sum_{i=1}^m z_i/m.$ The density
  \eqref{eq:marginalGamma} is usually explored by MCMC sampling and
  several MH and Gibbs algorithms have been proposed in the literature
  \citep[see e.g][]{geor:mccu:1997, guan:step:2011,
    yang:wain:jord:2016, grif:latu:2021, zane:robe:2019, zhou:yang:2022,
    lian:livi:2022}. The proposal densities of MH algorithms for
  \eqref{eq:marginalGamma} are generally mixtures of three types of
  local moves, namely ``addition'', ``deletion'' and ``swap''. In
  order to describe these proposals, for a given model $\gamma$, let
  $\mathcal{N}(\gamma) = \gamma^{+} \cup \gamma^{\circ} \cup
  \gamma^{-}$ denote a neighborhood of $\gamma$, where $\gamma^{+}$ is
  an ``addition'' set containing all the models with one of the
  $p-|\gamma| $ remaining covariates added to the current model
  $\gamma$, $\gamma^{-}$is a ``deletion'' set obtained by removing one
  variable from $\gamma;$ and $\gamma^{\circ}$ is a ``swap'' set
  containing the models with one of the variables from $\gamma$
  replaced by one variable from $\gamma^c.$ The proposal densities of
  the MH chains are of the form
  \begin{equation}
    \label{eq:vsmhprop}
    f(\gamma'|\gamma) = \frac{b^{+}(\gamma) I_{\gamma^{+}}(\gamma')}{p-|\gamma|}+\frac{b^{-}(\gamma)I_{\gamma^{-}}(\gamma')}{|\gamma|} + \frac{b^{\circ}(\gamma)I_{\gamma^{\circ}}(\gamma')}{|\gamma|(p-|\gamma|)},
  \end{equation}
  where $b^{+}(\gamma), b^{-}(\gamma), b^{\circ}(\gamma)$ are
  non-negative constants summing to 1. When $b^{+}(\gamma),
  b^{-}(\gamma), b^{\circ}(\gamma)$ are all constants independent of
  $\gamma$, \cite{zhou:yang:2022} refer to the resulting MH algorithm
  as asymmetric RW. In their simulation examples,
  \cite{zhou:yang:2022} sets $b^{+}(\gamma) = b^{-}(\gamma)=0.4,
  b^{\circ}(\gamma)=0.2$. \cite{yang:wain:jord:2016} set
  $b^{+}(\gamma) = (p-|\gamma|)/2p, b^{-}(\gamma) = |\gamma|/2p$ and
  $b^{\circ}(\gamma) = 1/2$, and since in this case,
  $f(\gamma'|\gamma)= f(\gamma|\gamma')$ for all $\gamma,
  \gamma'$, the resulting MH algorithm is called the symmetric
  RW. \cite{yang:wain:jord:2016} establish rapid mixing (mixing
  time is polynomial in $m$ and $p$) of the symmetric RW algorithm,
  but in high dimensional examples, the RW algorithms can suffer
  from slow convergence, and their efficiency can be improved by using
  informative proposals. Indeed, motivated by \cite{zane:2020}, where
  variable selection was not discussed explicitly, a couple of other
  informative proposals have recently been constructed, for example,
  the tempered Gibbs sampler of \cite{zane:robe:2019}, the adaptively
  scaled individual adaptation proposal of \cite{grif:latu:2021}, and
  the Locally Informed and Thresholded proposal distribution of
  \cite{zhou:yang:2022}.

We now consider our proposed geometric MH algorithm with
$k=1$, the base density \eqref{eq:vsmhprop} and
\begin{equation}
\label{eq:vsaptar}
g(\gamma'|\gamma)  = 
\begin{cases}
\frac{\psi(\gamma|z)}{c_{\gamma}} & \text{ if } \gamma' \in \mathcal{N}(\gamma),\\
0, & \text{otherwise},
\end{cases}
\end{equation}
where
$c_{\gamma} = \sum_{\gamma' \in \mathcal{N}(\gamma)}
\psi(\gamma|z)$. For implementing this proposed algorithm, we need to
efficiently compute $\psi(\gamma|z)$ for all
$\gamma' \in \mathcal{N}(\gamma)$. Note that \cite{zhou:yang:2022}
consider only the `addition' and `deletion' moves in their simulation
examples, greatly reducing the computational burden. Here, we use the
fast Cholesky updates of \cite{li:dutt:roy:2023} for rapidly computing
$\psi(\gamma|z)$ for all $\gamma' \in \mathcal{N}(\gamma)$. Once
$g(\gamma'|\gamma), \gamma' \in \mathcal{N}(\gamma)$ are computed, the
inner product $\langle \sqrt{f}, \sqrt{g} \rangle$ is available.
 Furthermore, both \cite{yang:wain:jord:2016} and \cite{zhou:yang:2022} use
\pcite{zell:1986} $g-$prior on $\beta$ and a different prior on
$\gamma$. The $g-$prior, although a popular alternative to
the independent normal prior \eqref{subeq:priorBeta}, it requires all
$m \times q$ sub-matrices of $W$ have full column rank for $q \le m-1$ and the
support of the prior on $\gamma$ is restricted to models of size at most $m-1$.

We now perform extensive simulation studies and compare our geometric
MH algorithm to the RW algorithms. In particular, we consider the
symmetric RW (RW1), the asymmetric RW with $b^{+}(\gamma) =
b^{-}(\gamma)=0.4, b^{\circ}(\gamma)=0.2$ (RW2), the two geometric MH
algorithms with the base density \eqref{eq:vsmhprop} corresponding to
 RW1 and RW2, denoted by GMC1 and GMC2, respectively. Our numerical studies are
conducted in the following five different simulation settings.

\noindent{\bf Independent predictors:} In this example, following \cite{li:dutt:roy:2023}, entries of $W$ are
generated independently from $N(0,1)$. The coefficients are specified
as $\beta_1=0.5, \beta_2=0.75, \beta_3=1, \beta_4=1.25, \beta_5=1.5,$
and $\beta_j=0, \forall j > 5.$

\noindent{\bf Compound symmetry:} This example is taken from Example 2 in
\citet{wang:leng:2016}. The rows of $W$ are generated independently
from $N_p\left(0,(1-\rho)I_p + \rho1_p1_p^\top\right)$ where we take
$\rho = 0.6$. The regression coefficients are set as $\beta_j = 5$ for
$j = 1, \ldots, 5$ and $\beta_j=0$ otherwise.

\noindent{\bf Auto-regressive correlation:} Following Example 2 in
\citet{wang:leng:2016}, $W_j = \rho W_{j-1} + (1-\rho^2)^{1/2}b_j,$
for $1 \leq j \leq p,$ where $W_0$ and $b_j$ ($1\leq j \leq p)$ are
iid $\sim N_m(0,I_m).$ Following \cite{li:dutt:roy:2023}, we use $\rho=0.6$ and set the regression
coefficients as $\beta_1 = 3$, $\beta_4=1.5$, $\beta_7=2$ and
$\beta_j=0$ for $j \not \in \{1,4,7\}$.

\noindent{\bf Factor models:} Following \citet{wang:leng:2016} and
\cite{li:dutt:roy:2023}, we first generate a $p\times 2$ factor matrix
$F$ whose entries are iid standard normal. Then the rows of $W$ are
independently generated from $N_p(0,FF^\top + I_p).$ The regression
coefficients are set to be the same as in compound symmetry example.

\noindent{\bf Extreme correlation:} Following \citet{wang:leng:2016}, in this
challenging example, we first simulate $b_j$ , $j=1, \ldots, p$ and
$t_j$, $j=1, \ldots, 5$ independently from $N_m(0, I_m)$. Then the
covariates are generated as $W_j=(b_j+t_j)/\sqrt{2}$ for $j=1, \ldots,
5$ and $W_j=(b_j+\sum_{i=1}^5 t_i)/2$ for $j = 6, \ldots, p$. As in
\cite{li:dutt:roy:2023}, we set $\beta_j = 5$ for $j = 1, \ldots, 5$
and $\beta_j=0$ for $j = 6, \dots, p$. Thus, the correlation between
the response and the unimportant covariates is around $2.5/\sqrt{3}$
times larger than that between the response and the true covariates,
making it difficult to identify the important covariates.

Our simulation experiments are conducted using 100 simulated pairs of
training and testing datasets. For each simulation setting, we set
$p=10000$ and $m=400$ for both training and testing data
sets. Following \cite{li:dutt:roy:2023}, we choose $w$ and $\lambda$
to be $\sqrt{m}/p$ and $m/p^2$, respectively. 
The error variance $\sigma^2$ is set by assuming different theoretical
$R^2$ values. While the results for $R^2 = 90\%$ are provided in
Table~\ref{tab:rsqpt9}, the supplement contains the results for $R^2 =
60\%$ and $R^2 = 75\%$. All results are based on 100 iterations of the
GMC chains and 50,000 iterations of the RW chains with all chains
started at the null model.

To evaluate the performance of the MCMC algorithms,
we consider several metrics that we describe now. Following
\cite{zhou:yang:2022}, let $\hat{\gamma}_{\text{max}}$ be the model
with the largest posterior probability that has been sampled by any of
the four algorithms. If an algorithm has never sampled
$\hat{\gamma}_{\text{max}}$, the run is considered as a failure. Also,
let $\hat{\gamma}_{\text{med}}$ be the median probability model, that is, $\hat{\gamma}_{\text{med}}$ is the
set of variables with estimated marginal inclusion probability (MIP)
above 0.5 \citep{barb:berg:2004}. We compute (1) Number of runs (Success) out of 100
repetitions when $\hat{\gamma}_{\text{max}}$ is sampled (2) median
number of iterations ($N_{\text{success}}$) needed to sample
$\hat{\gamma}_{\text{max}}$ among the successful runs (3) median time
in seconds (Time) to reach $N_{\text{success}}$ among the successful
runs (4) mean squared prediction error (MSPE) based on
$\hat{\gamma}_{\text{med}}$ for testing data (5) mean squared error
(MSE$_{\bm{\beta}}$) between the estimated regression coefficients
corresponding to $\hat{\gamma}_{\text{med}}$ and the true coefficients
(6) average model size (size), which is calculated as the average
number of predictors included in $\hat{\gamma}_{\text{med}}$ overall
the replications (7) coverage probability (coverage) which is defined
as the proportion of times $\hat{\gamma}_{\text{med}}$ contains the
true model (8) false discovery rate (FDR) for $\hat{\gamma}_{\text{med}}$ (9) false negative rate
(FNR) for $\hat{\gamma}_{\text{med}}$ and (10) the Jaccard index, which is defined as the size of the
intersection divided by the size of the union of
$\hat{\gamma}_{\text{med}}$ and the true model. All computations for these simulation examples are
done on the machine mentioned in Example~\ref{sec:exampmixture}.

\begin{table}
\begin{center}
\small\addtolength{\tabcolsep}{-2pt}
\caption{Results for different samplers for the Bayesian variable selection example. Success, Coverage, FDR,
 FNR and Jaccard Index are reported in percentages.}
\label{tab:rsqpt9}
\begin{tabular}{lrrrrrrrrrr}
  \hline\hline
        &Success&$N_{\text{success}}$&Time 
       &MSPE     & MSE$_\beta$
        &\shortstack{Model \\ size} &Coverage
       & FDR
       & FNR
        &\shortstack{Jaccard \\ Index} \\
          \hline
          & \multicolumn{10}{c}{Independent design}\\
  \hline
  RW1  &64&32458 &126.89& 1.803&1.192&3.42&19&0.0 & 31.6&68.4\\
  GMC1 &100&9&0.92&0.636&0.008 &5.00 & 100&0.0&0.0 &100.0\\
    RW2 &49& 31849&120.45& 2.224&1.570&3.20&11&0.0&36.0&64.0\\
  GMC2 &100&10&0.97&0.636&0.008 &5.00 & 100&0.0&0.0 & 100.0\\
	\hline
 & \multicolumn{10}{c}{Compound symmetry design with $ r= 0.6$}\\
  \hline
  RW1  &56 &36126& 213.50& 56.998& 22.450 &4.39 &50 &0.7 &12.8 &86.8\\
  GMC1 &100 &    9&   0.92 &48.149 & 1.366 &5.00 &100 &0 &0.0& 100.0  \\
    RW2 &62& 27286& 141.17& 70.738& 51.577& 3.84 &28& 3.3& 26.0& 72.9 \\
  GMC2 &100 &   10&  0.98& 48.149&  1.366& 5.00& 100& 0& 0.0& 100.0\\
	\hline
         & \multicolumn{10}{c}{Autoregressive correlation design with $ r= 0.6$}\\
  \hline
  RW1  & 62 & 31092 &126.58& 3.433& 1.855& 2.79 & 64& 6.7 &13.3 &83.6 \\                  
  GMC1 &  100 &    6 &  0.47& 2.148& 0.021& 3.01 &100 &0.2 &0.0& 99.8  \\
    RW2 &  82 &24836&  94.07& 4.790& 3.866& 2.65 &45 &11.3& 22.3& 73.6 \\
  GMC2 & 100&   6 &  0.44& 2.148 &0.021& 3.01 &100& 0.2 &0.0 &99.8\\
        \hline
         & \multicolumn{10}{c}{Factor model design}\\
  \hline
  RW1  & 54& 34870 &175.73 & 79.107 &30.563 &4.11 &33& 1.8 & 19.6 & 79.5\\
  GMC1 &99  &  11  & 1.12&  43.014 & 1.338 &4.96 &99& 0.0& 0.8 &99.2\\   
    RW2 &  61 & 35321& 162.96& 102.566 & 49.133& 3.52 & 13 & 2.2 & 31.4 & 67.6\\
  GMC2 &  100&    10  & 0.97&  42.005&  0.328 & 5.00& 100& 0.0 & 0.0& 100.0\\
        \hline
         & \multicolumn{10}{c}{Extreme correlation design}\\
  \hline
  RW1  &   47 & 35036& 171.90& 39.850& 27.131& 4.02& 36& 1.4 &20.8 &78.5\\
 GMC1 &  100&    12&   1.22& 16.968 & 6.563& 4.85& 96 &3.5 & 3.8  &96.2\\
 RW2 & 58& 31982& 145.68& 48.720 &37.798& 3.70& 19 &3.1 &28.4 &70.3\\
  GMC2 & 100 &   11&  1.07& 14.182&  0.178 &5.00 &100 &0.0 &0.0 &100.0\\
        \hline
\hline
\end{tabular}
\end{center}
\end{table}
From Table~\ref{tab:rsqpt9}, we see that the proposed geometric algorithms
successfully find the model $\hat{\gamma}_{\text{max}}$ in all 100
repetitions across all five simulation settings. Whereas the RW
algorithms failed to find $\hat{\gamma}_{\text{max}}$ around $40 \%$
of the time, and the failure rate could be as high as $53 \%$. Also,
across all three different values of $R^2$, we see that the proposed
geometric algorithms always hit $\hat{\gamma}_{\text{max}}$ much
faster than the RW algorithms. Indeed, remarkably, starting from the
null model, the median number of iterations ($N_{\text{success}}$) to
reach the model $\hat{\gamma}_{\text{max}}$ for the informative MH
algorithms is always less than 15, except for the extreme correlation
design with $R^2 =75 \%$ when $N_{\text{success}} = 25$ for GMC1. On
the other hand, among the successful runs, RW algorithms generally
required about 30,000 iterations to find
$\hat{\gamma}_{\text{max}}$. From Table~\ref{tab:rsqpt9}, we see that
the median wall time needed for the geometric MH algorithms to
generate $N_{\text{success}}$ samples is less than 1.3 seconds in all
five scenarios, whereas, among the successful runs, the average time
to reach $\hat{\gamma}_{\text{max}}$ for the RW algorithms was as high
as 3.5 minutes. Also, model averaging with 100 samples of the
geometric MH algorithms results in much better performance than for the RW
algorithms with 50,000 iterations. Indeed, the median probability model
$\hat{\gamma}_{\text{med}}$ obtained from the informed MH algorithms
is generally the true model leading to about 100\% coverage. Also, the
geometric MH algorithms resulted in larger Jaccard index values
and smaller MSPE and MSE$_{\bm{\beta}}$ values. Also, although the
symmetric RW generally outperformed the asymmetric RW, the geometric
MH algorithms based on either of these RW algorithms resulted in
similar performance. Thus, the GMC algorithms outperform the RW algorithms in terms of finding the maximum a posteriori model
$\hat{\gamma}_{\text{max}}$, as well as leading to better model fitting
and prediction accuracies based on Bayesian model averaging.

{\bf Real data analysis:} We now consider an ultra-high dimensional real
dataset and analyze it by fitting \eqref{eq:litdiff} using the
proposed geometric MH algorithm. This maize shoot apical meristem
(SAM) dataset was generated by \cite{leib:li:2015}. The maize SAM is a
small pool of stem cells that generate all the above-ground organs of
maize plants. \cite{leib:li:2015} showed that SAM size is correlated
with a variety of agronomically important adult traits such as
flowering time, stem size, and leaf node number. In
\cite{leib:li:2015}, a diverse panel of 369 maize inbred lines was
considered, and close to 1.2 million single nucleotide polymorphisms
(SNPs) were used to study the SAM volume. After removing duplicates
and SNPs with minor allele frequency (MAF) less than 5\%, we end up
with $p = 810,396$ markers, and the response $z$ is the log of the SAM
volume for $m=369$ varieties. The inbred varieties are
bi-allelic, and we store the marker information in a sparse format by
coding the minor alleles by one and the major alleles by zero.

We first ran the GMC1 chain with $\lambda = m/p^2= 5.61865e-$10 and
$w = \sqrt{m}/p= 2.37037e-$5 (the default choices for $\lambda$ and
$w$ in the function {\it geomc.vs} in the accompanying R package
geommc) for $N=100$ iterations starting from the null model. Only one
variable resulted in with estimated MIP above 0.5. Indeed,
$\hat{\gamma}_{\text{med}} = 1_{83878775}$. The $R^2$ value based on
fitting $\hat{\gamma}_{\text{med}}$ on the response $z$ is
20.07\%. Next, we ran the GMC1 chain for 100 iterations starting at
the null model, but this time following \cite{li:dutt:roy:2023}, we
took $\lambda = \sqrt{m}=19.20937$ (high shrinkage) and a higher value
for $w = m/p=0.00046$. This time, $\hat{\gamma}_{\text{med}}$
contained four variables with
$\hat{\gamma}_{\text{med}}= (1_{83878775}, 2_{175541357},
3_{226820323}, 8_{27127525})$. In addition to the estimated median
model $\hat{\gamma}_{\text{med}}$, we also consider a weighted average
model (WAM) $\hat{\gamma}_{\text{wam}}$. For the unique MCMC samples
$\{\gamma^{(n)}\}_{n= 1}^N$ we assign the weights
$\text{wt}_n = \psi(\gamma^{(n)}| z)/\sum_{i=1}^N
\psi(\gamma^{(i)}|z)$ according to the marginal posterior pmf
\eqref{eq:marginalGamma}. Then, the approximate marginal inclusion
probability for the $j$th variable is computed as
$\hat\pi_j = \sum_{n=1}^N \text{wt}_n \mathbb{I}(\gamma^{(n)}_j = 1)$
and define the WAM as the model containing variables $j$ with
$\hat\pi_j > 0.5.$ Based on the 100 iterations of the GMC1 chain,
$\hat{\gamma}_{\text{wam}}$ consisted of five variables with
$\hat{\gamma}_{\text{wam}} = (1_{83878775}, 2_{79769999},
2_{175541357}, 8_{27127525}, 10_{10606917})$. The $R^2$ values for
$\hat{\gamma}_{\text{med}}$ and $\hat{\gamma}_{\text{wam}}$ are
39.10\% and 43.93\%, respectively. We analyzed the real dataset on the
Linux server mentioned in Example~\ref{sec:exampsglmm}, where it took
5.34 minutes to complete 100 iterations of GMC1 with
$\lambda = \sqrt{m}$ and $w = m/p$. With these values of
$(\lambda, w)$, we repeated 100 iterations of GMC1 50 times, each time
with a different seed. The models $\hat{\gamma}_{\text{med}}$ and
$\hat{\gamma}_{\text{wam}}$ vary in these 50 runs, suggesting that the
posterior surface is highly multimodal. The range of the size of
$\hat{\gamma}_{\text{med}}$ and $\hat{\gamma}_{\text{wam}}$ are (2,7)
and (3, 9), respectively. Also, the ranges of $R^2$ values for
$\hat{\gamma}_{\text{med}}$ and $\hat{\gamma}_{\text{wam}}$ are
(27.29\%, 46.85\%) and (32.76\%, 57.18\%), respectively.

When we ran GMC2 for 100 iterations with $\lambda = m/p^2$ and $w =
\sqrt{m}/p$ started at the null model, both
$\hat{\gamma}_{\text{med}}$ and $\hat{\gamma}_{\text{wam}}$ resulted
in the empty model, although the marker with the largest MIP and the
largest weighted MIP was the same SNP $1_{83878775}$ obtained by GMC1
with these choices for $(\lambda, w)$. Indeed, the largest MIP and
weighted MIP were 0.47 and 0.49, respectively. Next, we ran the GMC2
chain for 100 iterations started at the null model with $\lambda =
\sqrt{m}$ and $w = m/p$.  Based on the 100 iterations of the GMC2
chain, $\hat{\gamma}_{\text{med}}$ consisted of eight variables with
$\hat{\gamma}_{\text{med}} = (1_{83878648}, 2_{175541342},
2_{179751269}, 4_{225874150}, 5_{21046482}, 8_{14610995},
8_{72384070},$ $8_{115299982})$ and\\ $\hat{\gamma}_{\text{wam}} =
(1_{83878648}, 2_{175541342}, 3_{170821024}, 4_{237392971},
8_{72384070}, 9_{4338284})$. The $R^2$ values for
$\hat{\gamma}_{\text{med}}$ and $\hat{\gamma}_{\text{wam}}$ in this
case are 43.09\% and 49.37\%, respectively.  It took 11.22 minutes to
complete 100 iterations of GMC2 with $\lambda = \sqrt{m}$ and $w =
m/p$.  Finally, based on 50 repetitions of 100 iterations of the GMC2
chain with $\lambda = \sqrt{m}$ and $w = m/p$, each time with a
different seed, we observe the ranges of the size of
$\hat{\gamma}_{\text{med}}$ and $\hat{\gamma}_{\text{wam}}$ are (4,17)
and (2, 8), respectively. Also, the ranges of $R^2$ values for
$\hat{\gamma}_{\text{med}}$ and $\hat{\gamma}_{\text{wam}}$ are
(40.11\%, 60.10\%) and (28.91\%, 54.61\%), respectively.
\end{example}

\section{Discussion}
\label{sec:disc}
In this work, we have proposed an original framework for developing
informative MCMC schemes. The availability of explicit expressions for
the exponential and inverse exponential maps under the novel use of
square-root representation for pdfs plays a crucial role in
constructing computationally efficient Riemannian manifold geometric
MCMC algorithms. The Riemannian manifold MCMC algorithms available in
the literature work for only continuous targets and involve heavy
computational burden for evaluating the transition densities as well
as for adjusting the tuning parameters. On the other hand, the
proposed geometric MH algorithm works for both discrete and continuous
targets, provides a simple step size tuning as in RWM based on
acceptance rates criteria, and allows a flexible framework for
combining localized steps of the base kernel with cheap to evaluate
local and global approximations of the target to control the
directions of `global' moves of the candidate density while still
using the exact target density in the MH acceptance rate. Indeed, as
demonstrated through examples, the concurrent global-local steps
facilitate moves between the modes and simultaneous exploration of the modal
regions. Thus, the proposed method shows the utility of the explicit
use of the intrinsic geometry of the space of pdfs in constructing
informative MCMC schemes.

Analyses of different high dimensional linear, nonlinear, multimodal
complex statistical models demonstrate the broad applicability of the
proposed method. These examples using both real and simulated data
show that the proposed geometric MCMC algorithms can lead to huge
improvements in mixing and efficiency over alternative MCMC schemes.
The empirical findings corroborate the theoretical results derived
here, comparing the geometric MCMC with the MH chain using the base
proposal density. The theoretical results developed here regarding
different Markov chain orderings hold for general state space Markov
chains and can be used to compare any MCMC algorithms.

The article presents various avenues for potential extensions and
future methodological and theoretical works. The proposed geometric
MCMC scheme is general and can be applied to sample from arbitrary
discrete or continuous targets. The methodology is flexible and allows
general choices for the baseline density and various local/global
approximations of the target for specifying the directions of moving
the baseline proposal density. For the popular Bayesian variable
selection model considered here, we have developed some specific
choices for the densities $f$ and $g$ and are implemented in the
accompanying R package {\it geommc}. In particular, we have used RW
base densities and the target pmf on a neighborhood as $g$. We have
demonstrated the efficiency of the resulting geometric MH algorithm
through analyses of high dimensional simulated datasets and a gigantic
real dataset with close to a million markers for GWAS. One can
consider the Locally Informed and Thresholded proposal of
\cite{zhou:yang:2022} and the adaptively scaled individual adaptation
proposal of \cite{grif:latu:2021} as base densities. In particular, by
allowing multiple variables to be added or deleted from the model in a
single iteration, the algorithm can make large jumps in model space
\citep{guan:step:2011, lian:livi:2022}. For $g$, one can consider
various local tempered and non-tempered versions of $\psi$
\citep{zane:robe:2019}.  We anticipate future works exploring and
comparing different choices of $\mathcal{G}$ mentioned in
Section~\ref{sec:mmh} and developing appropriate $f$ and $ g$
functions for constructing efficient geometric MCMC schemes for other
classes of statistical models. These choices of $f$ and $ g$ densities
can then be implemented in the geommc package for specific
applications.

Algorithm~\ref{alg:mmhmix} uses a fixed $\epsilon$; on the other hand,
one can choose $\epsilon$ adaptively, on the fly. A potential future
work is to build such adaptive versions of
Algorithm~\ref{alg:mmhmix}. Also, it will be interesting to study the
performance of the proposed geometric MH algorithm for the Bayesian
variable selection example in the context of analyzing ordinal responses
\citep{zhen:guo:2023}, most likely extended in a
Metropolis-within-Gibbs framework.  \cite{zhou:yang:2022} prove that the mixing
time of LIT-MH is independent of the number of covariates under the
assumptions of \cite{yang:wain:jord:2016}. \cite{zhou:chan:2023}
derived mixing time bounds for random walk MH algorithms for
high-dimensional statistical models with discrete parameter spaces and
more recently \cite{chan:zhou:2024} extended these results to study
\pcite{zane:2020} informed MH algorithms. A future study is to
undertake such mixing time analysis of our proposed geometric MH
algorithm for Bayesian variable selection.

\smallskip

\noindent{\bf Supplemental Materials}

The supplemental materials contain an alternative formulation of the
geometric MH framework, proofs of theoretical results, additional
technical derivations for some posterior sampling, several plots
corresponding to different examples considered in the main article,
numerical results from an additional data analysis for the Bayesian
logistic regression example, and further simulation studies on the
Bayesian variable selection example.
\bibliographystyle{ims}
\bibliography{C:/Users/vroy/Box/Misclen/LocalTexFiles/bibtex/bib/misc/refs}

\pagebreak
 \centerline{\large\bf A geometric approach to informed MCMC sampling}
 \vspace{.25cm}
 \centerline{Vivekananda Roy} 
\vspace{.4cm} 
\centerline{\it Department of Statistics, Iowa State University, USA}
\vspace{.55cm}
 \centerline{\bf Supplementary Materials}
\vspace{.55cm}

\setcounter{equation}{0}
\setcounter{figure}{0}
\setcounter{table}{0}
\setcounter{page}{1}
\setcounter{section}{0}
\makeatletter
\renewcommand{\thesection}{S\arabic{section}}
\renewcommand{\thesubsection}{\thesection.\arabic{subsection}}
\renewcommand{\theequation}{S\arabic{equation}}
\renewcommand{\thefigure}{S\arabic{figure}}
\renewcommand{\bibnumfmt}[1]{[S#1]}
\renewcommand{\citenumfont}[1]{S#1}
\renewcommand{\thetable}{S\arabic{table}}

\section{An alternative geometric MH algorithm}

We now describe an alternative to Algorithm~\ref{alg:mmhmix}. Denoting this alternative geometric
MH chain by $\{\tilde{X}^{(n)}\}_{n \ge 1}$, suppose $\tilde{X}^{(n-1)} = x$ is the current value
of the chain. The following iterations are used to move to $\tilde{X}^{(n)}$.
\begin{algorithm}[H]
\caption{The $n$th iteration}
\begin{algorithmic}[1]
  \label{alg:mmh}
  \STATE Choose $i' \in \{1,\dots,k\}$ with probability $P(i' =i) =a_i, 1 \le i \le k$.
  \STATE Draw $y \sim \phi_{i',\epsilon} (y|x)$.
  \STATE Set 
    \begin{equation}
      \label{eq:accep2}
    \tilde{\alpha}_{i'}(x, y) = \mbox{min} \Big\{\frac{\psi(y) \phi_{i', \epsilon}(x|y)}{\psi(x) \phi_{i', \epsilon}(y|x)}, 1 \Big\}.
    \end{equation}
  \STATE Draw $\delta \sim$ Uniform $(0, 1)$. If $\delta < \tilde{\alpha}_{i'}(x, y)$
  then set $\tilde{X}^{(n)} \leftarrow y$, else set $\tilde{X}^{(n)} \leftarrow x$.
\end{algorithmic}
\end{algorithm}

\section{Proofs of theoretical results}
\begin{proof}[Proof of Proposition~\ref{prop:perturb}]
We have
  \begin{align}
    \label{eq:sqprop}
    &\exp_{\sqrt{f}}(\epsilon \exp^{-1}_{\sqrt{f}}(\sqrt{g_i}))(y|x)  = \cos\big(\epsilon \theta_{i,x} [\sin(\theta_{i,x})]^{-1}\big\|\sqrt{g_i(y|x)} - \sqrt{f(y|x)} \cos(\theta_{i,x})\big\|\big) \nonumber\\&  \sqrt{f(y|x)} +
                            \sin \big( \epsilon \theta_{i,x} [\sin(\theta_{i,x})]^{-1}\big\|\sqrt{g_i(y|x)} - \sqrt{f(y|x)} \cos(\theta_{i,x})\big\|\big)\epsilon\theta_{i,x} [\sin (\theta_{i,x})]^{-1} \nonumber\\& \big(\sqrt{g_i(y|x)} - \sqrt{f(y|x)} \cos(\theta_{i,x})\big)\Big\{\epsilon\theta_{i,x} [\sin(\theta_{i,x})]^{-1}\big\|\sqrt{g_i(y|x)} - \sqrt{f(y|x)} \cos(\theta_{i,x})\big\|\Big\}^{-1}\nonumber\\
    & = \cos\big(\epsilon \theta_{i,x} [\sin(\theta_{i,x})]^{-1}\big\|\sqrt{g_i(y|x)} - \sqrt{f(y|x)} \cos(\theta_{i,x})\big\|\big) \sqrt{f(y|x)} \nonumber\\& +
  \sin \big( \epsilon \theta_{i,x} [\sin(\theta_{i,x})]^{-1}\big\|\sqrt{g_i(y|x)} - \sqrt{f(y|x)} \cos(\theta_{i,x})\big\|\big) \frac{\big[\sqrt{g_i(y|x)} - \sqrt{f(y|x)} \cos(\theta_{i,x})\big]}{\big\|\sqrt{g_i(y|x)} - \sqrt{f(y|x)} \cos(\theta_{i,x})\big\|}.
  \end{align}
  Note that
  \begin{align}
    \label{eq:norm}
    \big\|\sqrt{g_i(y|x)} - \sqrt{f(y|x)} \cos(\theta_{i,x})\big\|^2 &= \big\|\sqrt{g_i(y|x)}\big\|^2 + \big\langle \sqrt{f(y|x)}, \sqrt{g_i(y|x)} \big\rangle^2 \big\|\sqrt{f(y|x)}\big\|^2\nonumber\\& \quad - 2 \big\langle \sqrt{f(y|x)}, \sqrt{g_i(y|x)} \big\rangle^2 \nonumber\\
    &= 1 - \big\langle \sqrt{f(y|x)}, \sqrt{g_i(y|x)} \big\rangle^2.
  \end{align}
Using the fact that $\sin(\cos^{-1}(r)) = \sqrt{1- r^2},$ we then have
\begin{equation}
    \label{eq:coef}
  \epsilon \theta_{i,x} [\sin(\theta_{i,x})]^{-1}\big\|\sqrt{g_i(y|x)} - \sqrt{f(y|x)} \cos(\theta_{i,x})\big\| = \epsilon \theta_{i,x}.
\end{equation}
Using \eqref{eq:coef} in \eqref{eq:sqprop}, then we have
\begin{equation}
\label{eq:pertrho}
      \exp_{\sqrt{f}}(\epsilon \exp^{-1}_{\sqrt{f}}(\sqrt{g_i}))(y|x)  = \cos(\epsilon \theta_{i,x}) \sqrt{f(y|x)} +
  \sin ( \epsilon \theta_{i,x}) \zeta_i(y|x).
\end{equation}
Next, squaring both sides of \eqref{eq:pertrho} and using the fact that $2 \cos(\epsilon \theta_{i,x}) \sin(\epsilon \theta_{i,x}) = \sin(2\epsilon \theta_{i,x})$ we have \eqref{eq:prop}.
  Now
  \begin{align}
  \label{eq:pdfh}
  \|\zeta_i (y|x)\|^2 &= \langle \zeta_i(y|x), \zeta_i(y|x) \rangle \nonumber\\&= \frac{\big\|\sqrt{g_i(y|x)}\big\|^2 + \big\langle \sqrt{f(y|x)}, \sqrt{g_i(y|x)} \big\rangle^2 \big\|\sqrt{f(y|x)}\big\|^2 - 2 \big\langle \sqrt{f(y|x)}, \sqrt{g_i(y|x)} \big\rangle^2}{1 - \big\langle \sqrt{f(y|x)}, \sqrt{g_i(y|x)} \big\rangle^2} \nonumber\\&=1,
\end{align}
that is, $h_i = \zeta_i^2$ is a pdf.     Note that $0 \le \sin(2\epsilon \theta_{i,x}) \le 1$ as
    $0 \le \theta_{i,x} \le \pi/2$. Since
    \begin{align*}
        \big\langle \sqrt{f(y|x)}, \zeta_i(y|x) \big\rangle &= \frac{\big[\big\langle \sqrt{f(y|x)},
    \sqrt{g_i(y|x)} \big\rangle - \big\|\sqrt{f(y|x)}\big\|^2\big\langle \sqrt{f(y|x)}, \sqrt{g_i(y|x)} \big\rangle\big]}{\sqrt{1
      - \big\langle \sqrt{f(y|x)}, \sqrt{g_i(y|x)} \big\rangle^2}}\\& = \frac{\big[\big\langle \sqrt{f(y|x)},
    \sqrt{g_i(y|x)} \big\rangle - \big\langle \sqrt{f(y|x)}, \sqrt{g_i(y|x)} \big\rangle\big]}{\sqrt{1
      - \big\langle \sqrt{f(y|x)}, \sqrt{g_i(y|x)} \big\rangle^2}} =0,
\end{align*}
      
      $\zeta_i \in T_{\sqrt{f}} \mathcal{M}$. Thus from \eqref{eq:pdfh} it follows that \eqref{eq:prop}
    is a pdf.
\end{proof}
\begin{proof}[Proof of Theorem~\ref{thm:orde}]
 For proving 1 and 2(b), we consider the two cases $c> 1$ and $c \le 1$ separately.

 \noindent{Case $c>1$:}

 \noindent{Proof of 1.} Define the Mtf $P_1 = (1/c)P + (1-1/c) I$. From the assumption
  on $P, Q$, it follows that $P_1 \succeq Q$. Then from \citet[][Lemma
  3]{tier:1998} it follows that
  $\langle P_1 t, t \rangle_\psi \le \langle Qt, t \rangle_\psi $ for all $t \in
  L^2_0(\psi)$. Then the result holds as
  \[
    \langle P_1 t, t \rangle_\psi = \frac{1}{c} \langle P t, t \rangle_\psi +\Big(1-\frac{1}{c}\Big) \sigma^2_t.
    \]

  \noindent{Proof of 2(b).} Since $P$ is reversible with respect to $\psi$, so is
    the Mtf $P_1$. Since $P_1 \succeq Q$ from \citet[][Theorem 4]{tier:1998} it
    follows that $v(t, P_1) \le v(t, Q)$. Now, from \citet[][Corollary
    2.3]{latu:robe:2013} we know that
    $v(t, P_1)= c v(t, P) + (c-1) \sigma^2_t$, or
    \[
      v(t, P) = \frac{1}{c} v(t, P_1) + \frac{1-c}{c} \sigma^2_t.
    \]
    Since $v(t, P_1) \le v(t, Q)$, we have 
    \[
      v(t, P) \le \frac{1}{c} v(t, Q) + \frac{1-c}{c} \sigma^2_t.
    \]

    \noindent{Case $c\le 1$:}

    \noindent{Proof of 1.} Define $Q_1 = cQ + (1-c) I$. From the assumption
  on $P, Q$, it follows that $P \succeq Q_1$. Then from \citet[][Lemma
  3]{tier:1998} it follows that
  $\langle P t, t \rangle_\psi \le \langle Q_1t, t \rangle_\psi$ for all $t \in
  L^2_0(\psi)$. Then the result holds as
  \[
    \langle Q_1 t, t \rangle_\psi = c \langle Q t, t \rangle_\psi +(1-c)E_{\psi} t^2.
  \]

  \noindent{Proof of 2(b).} Since $Q_1$ is reversible with respect to
  $\psi$ and $P \succeq Q_1$ from \citet[][Theorem 4]{tier:1998}, we
  have $v(t, P) \le v(t, Q_1)$. Then the result follows as from
  \citet[][Corollary 2.3]{latu:robe:2013} we know that
    \[
      v(t, Q_1) = \frac{1}{c} v(t, Q) + \frac{1-c}{c} \sigma^2_t.
    \]
\noindent{Proof of 2(a).} Let $L^2_{0,1}(\psi) = \{s \in L^2_0(\psi): E_{\psi}s^2 =1\}$. From \citet[][chap VI]{reth:1993} we have
    \[
      \mbox{Gap}(P) = 1- \sup_{t \in L^2_{0,1}(\psi)} \langle Pt, t \rangle_\psi.
    \]
    From part 1, we know that for $t \in L^2_{0,1}(\psi)$,
    $\langle Pt, t \rangle_\psi \le c \langle Qt, t \rangle_\psi + (1-c)$. Thus, we
    have
    \[
      \mbox{Gap}(P) \ge 1- c\sup_{t \in L^2_{0,1}(\psi)} \langle Qt, t \rangle_\psi -1+c =c \;\mbox{Gap}(Q).
      \]
\end{proof}
\begin{proof}[Proof of Theorem~\ref{thm:peskmmh}]
  Note that
  \begin{align*}
    \psi(x)P_\phi(x, dy) \mu(dx)&= \psi(x)\phi_{\epsilon} (y|x)\alpha(x, y) \mu(dx)\mu(dy)\\
                     &= \min\{\psi(y)\phi_{\epsilon} (x|y), \psi(x)\phi_{\epsilon} (y|x)\} \mu(dx)\mu(dy)\\
                     &= \min\{\psi(y)\sum_{i=1}^k a_i\phi_{i, \epsilon} (x|y), \psi(x)\sum_{i=1}^k a_i \phi_{i,\epsilon} (y|x)\} \mu(dx)\mu(dy)\\
                     &\ge \sum_{i=1}^k a_i \min\{\psi(y)\phi_{i, \epsilon} (x|y), \psi(x)\phi_{i,\epsilon} (y|x)\} \mu(dx)\mu(dy)\\
                     &= \sum_{i=1}^k a_i \min\{\psi(y)[\cos^2(\epsilon \theta_{i,y}) f(x|y) + \sin^2(\epsilon \theta_{i,y}) h_i(x|y)],
    \\ & \hspace{.6in} \psi(x)[\cos^2(\epsilon \theta_{i,x}) f(y|x) + \sin^2(\epsilon \theta_{i,x}) h_i(y|x)]\} \mu(dx)\mu(dy)\\
                     &\ge \sum_{i=1}^k a_ic_{i\epsilon} \min\{\psi(y) f(x|y), \psi(x) f(y|x)\} \mu(dx)\mu(dy)\\
    &=c_\epsilon \psi(x)Q_f(x, dy)\mu(dx),
  \end{align*}
  where the last inequality follows due to $h_i(y|x) = d_i(y|x) f(y|x)$, $h_i(x|y)= d_i(x|y) f(x|y)$ and \eqref{eq:ci}.
\end{proof}
\begin{proof}[Proof of Lemma~\ref{lemm:uenorm}]
Since $\phi(x) = \cos^2(\epsilon \theta) f(x) + \sin^2(\epsilon \theta)
h(x),$ from \eqref{eq:normg} we have
\begin{align*}
  \frac{\phi(x)}{\psi(x)} &= \cos^2(\epsilon \theta) \exp(x-1/2) + \frac{\sin^2(\epsilon \theta)}{1-\exp(-1/4)}\Big(1 - \exp(-1/8) \exp(x/2 - 1/4)\Big)\\
  &=\frac{\sin^2(\epsilon \theta)}{1-\exp(-1/4)} + r(x),
\end{align*}
where
\[
  r(x) = \cos^2(\epsilon \theta) \exp(x-1/2) - [\sin^2(\epsilon
  \theta)\exp(-1/8)/\{1-\exp(-1/4)\}] \exp(x/2 - 1/4).
  \]It can be shown
that $r(x)$ attains minimum at
$c_0 =2 \log (\sin^2(\epsilon \theta) \exp(-1/8) /[2\cos^2(\epsilon \theta)\\
\{1-\exp(-1/4)\}]) + 1/2$. Thus,
\begin{align*}
  \min_x \frac{\phi(x)}{\psi(x)} &= \frac{\sin^2(\epsilon \theta)}{1-\exp(-1/4)} + r(c_0)\\
  &= \frac{\sin^2(\epsilon \theta)}{1-\exp(-1/4)} - \frac{\sin^4(\epsilon \theta) \exp(-1/4)}{4\cos^2(\epsilon \theta)\{1-\exp(-1/4)\}^2},
\end{align*}
which is positive for $\epsilon >0$. Thus, by
Proposition~\ref{prop:uniergo}, the geometric MH algorithm is
uniformly ergodic in this example.  
\end{proof}

\section{Extra plots for the Example 2}

Figure~\ref{fig:cauchyacf1} provides additional autocorrelation
function plots for the Example 2. In particular, it shows that unlike
the RW chains, the geometric MH algorithms with a diffuse normal
density for $g$ lead to fast decaying autocorrelations.

\begin{figure*}[h]
  \begin{minipage}[b]{0.5\linewidth}
    \includegraphics[width=\linewidth]{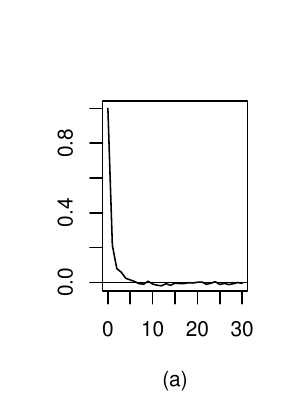}
  \end{minipage}
  \begin{minipage}[b]{0.5\linewidth}
    \includegraphics[width=\linewidth]{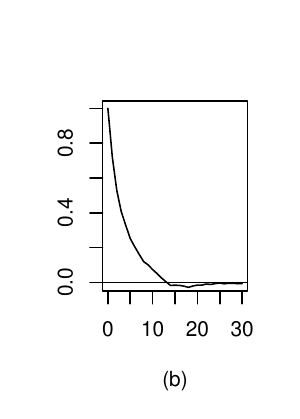}
  \end{minipage}
 \caption{Autocorrelation function plots of the geometric MH chains with $g$ as a diffuse normal density and (a) $t_2$ baseline density and (b) $N(x,1)$ baseline density where $x$ is the current state of the Markov chain. }
\label{fig:cauchyacf1}
\end{figure*}

\section{Extra plots for the Example 4}
Figure~\ref{fig:six_marg.y} shows the marginal histograms obtained
from the samples of the GMC($X_2$)-w-Gibbs chain.
\begin{figure*}[h]
      \begin{minipage}[b]{\linewidth}
    \includegraphics[width=\linewidth]{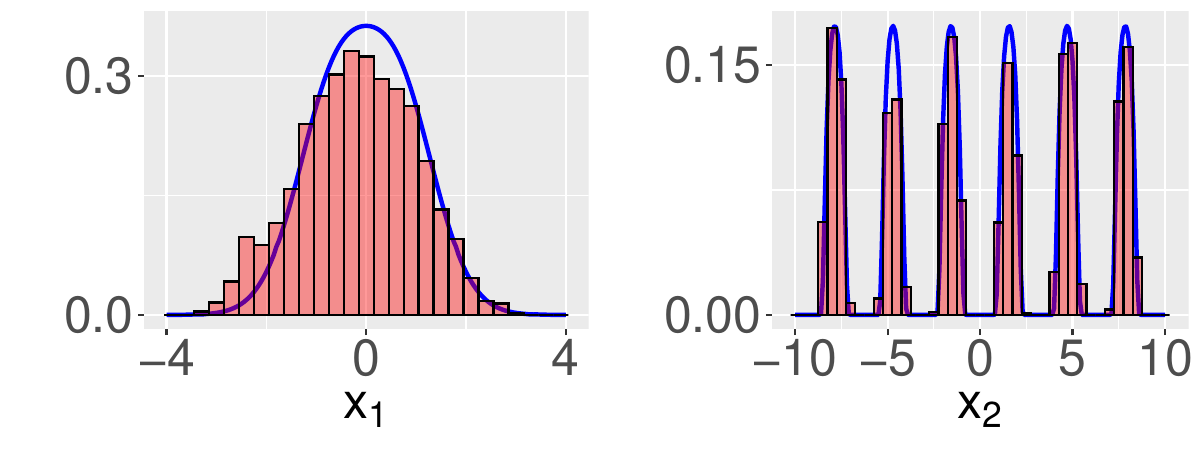}
  \end{minipage}
  \caption{Desired and attained marginals from 100,000 iterations of the GMC($X_2$)-w-Gibbs sampler for the six-mode target example.
    The histograms show that the chain successfully moves between the modes.}
 \label{fig:six_marg.y}
  \end{figure*}

\section{Additional real data example  for the Bayesian logistic regression example}
Here, we consider the German Credit dataset, which contains 
$p = 21$ covariates (including an intercept term), and the response, which is the classification of whether an applicant is considered a good
or bad credit risk based on $m=1000$ loan applicants. We analyze
the German Credit dataset by fitting a Bayesian logistic regression
model using the 10 MCMC algorithms (with appropriate modifications)
considered in Example 5 of the main paper. As in Example 5, we
consider the N$(0_{21}, 10^3 I_{21})$ prior density for $\beta$ and
ran all algorithms for 100,000 iterations, all started at $0_{21}$. For
the RWM, we use a normal proposal with covariance matrix
$\Sigma_f=0.1 \hat{\Sigma}$ to get an acceptance rate of around 50\%.
For MALA and MMALA, we take $h=0.001$ and $h= 1.2$, respectively, resulting in around 50\% acceptance rates. The proposal density
$f_1(\beta)$ is appropriately modified to
$ \phi_{21}(\beta; 0_{21}, \num{e-6}I_{21})$. For the German Credit
dataset, the Ind ($f_1$) chain does not move much, and the R package
mcmcse \citep{R:mcmcse} does not provide the ESS and mESS values.
From Tables~\ref{tab:germaness}-\ref{tab:germanacf}, we see that the
comparative performance of the 10 MCMC samplers remains similar as
observed in Example 5 of the main paper.
  \begin{center}
\begin{table*}[h]
  \caption{Multivariate ESS and ESS (minimum, median, maximum) and their time normalized values for different samplers for the German Credit data}
  \centering
  \begin{tabular}{cccccc}
\hline\hline  Sampler  & mESS  & ESS & mESS/sec  & ESS/sec  &MSJD \\
 \hline
  RW&1282&(785,1136,1622) &21&(13,19,27) &0.008  \\
    GMC(RW)&17563&(7775,17354,19004) &75&(33,74,81) &0.133\\
  Ind&42315&(28485,41480,45140) &515 &(347,505,550)&0.259 \\
  GMC(Ind)& 16652&(9094,15602,18370)&81&(44,76,90)&0.128 \\
  Ind($f_1$)&- &-&-&- &0.000\\
  GMC(Ind ($f_1$))& 15869&(7539,15542, 18804)&75&(36,73,89)&0.129\\
  MALA &3689 &(1236,2648,3880) &20 &(7,14,21)&0.011\\
  GMC(MALA)&18796 &(8981,17700,20389)&33 &(16,31,36)&0.138\\
  MMALA & 22938 &(18817,22470,24204) &4 &(4,4,5)&0.099\\
  GMC(MMALA)&29239 &(21238,28773,31873)&1&(1,1,1)&0.172\\
\hline
\end{tabular}
\label{tab:germaness}
\end{table*}
\end{center}

  \begin{center}
\begin{table*}[h]
  \caption{First eight autocorrelations for different samplers for the German Credit data}
  \centering
\begin{tabular}{ccccccccc}
\hline\hline  Sampler  & lag1  & lag2 & lag3  & lag4 &lag5 &lag6 &lag7 &lag8\\
 \hline
  RW&0.979& 0.958& 0.938& 0.918& 0.898& 0.880& 0.861& 0.843 \\
    GMC(RW)&0.731 & 0.547 &0.421 &0.332 &0.270 &0.224 &0.188& 0.161\\
  Ind&0.463& 0.272& 0.186& 0.134& 0.103& 0.082& 0.067& 0.057\\
  GMC(Ind)&0.742& 0.567& 0.445& 0.359& 0.297& 0.247& 0.207& 0.177\\
  Ind($f_1$)& 0.995& 0.991& 0.988& 0.985& 0.981& 0.978& 0.976& 0.972\\
  GMC(Ind ($f_1$))& 0.742& 0.568& 0.449& 0.364& 0.303& 0.253& 0.214 &0.184\\
  MALA & 0.949& 0.901& 0.856& 0.815& 0.776& 0.739&0.705&0.672\\
  GMC(MALA)&0.722& 0.537& 0.411& 0.326& 0.265& 0.221& 0.187& 0.161\\
  MMALA &0.634& 0.420& 0.287& 0.202& 0.142& 0.098& 0.068& 0.054\\
  GMC(MMALA)&0.594 &0.358 &0.219 &0.135 &0.084 &0.053 &0.033 &0.025\\
\hline
\end{tabular}
\label{tab:germanacf}
\end{table*}
\end{center}  

\section{Detailed steps in MCMC sampling for the Bayesian spatial GLMM example}

The joint posterior density of $(S, \beta,\sigma^2,\theta,\omega)$ is
\begin{align}
  \label{eq:spjoint}
    \psi(s,\beta,\sigma^2,\theta,\omega|z)= \bigg[\prod_{i=1}^m\pi(y_i|s_i)\bigg]\pi(s|\beta,\sigma^2,\theta,\omega)\pi(\beta|\sigma^2)\pi(\sigma^2)\pi(\theta)\pi(\omega),
\end{align}
where
\[
 \pi(y_i|s_i)= \frac{\exp\{-t_ie^{s_i}+s_iz_i\}t_i^{z_i}}{z_i!},
\]
$\pi(s|\beta,\sigma^2,\theta,\omega)$ is the density of
$N_m(X\beta, \Sigma)$, $\pi(\beta|\sigma^2)$ is the density of
$N_p(\mu_0, \sigma^2 \Sigma_0)$, and $\pi(\sigma^2), \pi(\theta), \pi(\omega)$
are the densities of
$\text{Inverse Gamma}(\alpha_1,\gamma_1), \text{Inverse
  Gamma}(\alpha_2,\gamma_2),$ and
$\text{Inverse Gamma}(\alpha_3,\gamma_3)$, respectively.

To construct the various MH-within-Gibbs algorithms discussed in
Example 6 of the main article for \eqref{eq:spjoint}, we now derive
its conditional densities. First, the full conditional density of $S$
is
\begin{equation}
  \label{eq:spscond}
    \psi(s|\beta,\sigma^2,\theta,\omega,z)\propto \bigg[\prod_{i=1}^m \frac{\exp\{-t_ie^{s_i}+s_iz_i\}t_i^{z_i}}{z_i!}\bigg]|\Sigma|^{-1/2}\exp\bigg\{-\frac{1}{2}(s-X\beta)^{\top}\Sigma^{-1}(s-X\beta)\bigg\}.
\end{equation}
For constructing MALA and MMALA for the conditional distribution of $S$, we must derive the first and higher order derivatives of the logarithm of the density \eqref{eq:spscond}. Since
\begin{equation}
  \label{eq:splogscond}
    \log \psi(s|\beta,\sigma^2,\theta,\omega,z)=\sum_{i=1}^m \bigg\{-t_ie^{s_i}+s_iz_i+z_i\log t_i\bigg\}-\frac{1}{2}\log|\Sigma|-\frac{1}{2}(s-X\beta)^{\top}\Sigma^{-1}(s-X\beta)+ c_1,
  \end{equation}
  for some constant $c_1$, we have
\begin{align*}
    \frac{\partial \log \psi(s|\beta,\sigma^2,\theta,\omega,z)}{\partial s}&=-t\cdot e^s+z-\Sigma^{-1}(s-X\beta),\\
    \frac{\partial^2 \log \psi(s|\beta,\sigma^2,\theta,\omega,z)}{\partial s^2}&=-\text{diag}(t\cdot e^s)-\Sigma^{-1}\equiv -G^{-1}, \;\mbox{say},\\
    \mbox{and}\;\frac{\partial \{G^{-1}\}_{(i,j)}}{\partial s_k}&=\begin{cases}
        t_k\exp\{s_k\},&\text{iff} \;i=j=k,\\
        0,&\text{otherwise},
    \end{cases}
\end{align*}
where for two $d$ dimensional vectors $a=(a_1,\dots,a_d)^\top$ and
$b=(b_1,\dots,b_d)^\top$, $a \cdot b =(a_1b_1,\dots,a_d b_d)^\top$, and
$\text{diag}(a)$ denotes the $d\times d$ diagonal matrix with diagonal
elements $a$.

From \eqref{eq:spjoint} we see that conditional on $s$, the
distributions of $\beta, \sigma^2, \theta$ and $\omega$ do not depend
on $z$. The conditional density of $\beta$ is
\begin{align*}
     \psi(\beta|s,\sigma^2,\theta,\omega) \propto&\pi(s|\beta,\sigma^2,\theta,\omega)\pi(\beta|\sigma^2)\\
     \propto&
      \exp\bigg\{-\frac{1}{2\sigma^2}(s-X\beta)^{\top}\Sigma_{\theta,\omega}^{-1}(s-X\beta)\bigg\}\exp\bigg\{-\frac{1}{2\sigma^2}(\beta-\mu_{0})^{\top}\Sigma_{0}^{-1}(\beta-\mu_{0})\bigg\},
\end{align*}
where $\Sigma=\sigma^2\Sigma_{\theta,\omega}$.
Thus,
\begin{equation}
  \label{eq:spbetacond}
      \beta|s,\sigma^2,\theta,\omega \sim N_p\bigg\{\bigg(X^{\top}\Sigma_{\theta,\omega}^{-1}X+\Sigma_{0}^{-1}\bigg)^{-1}\bigg(X^{\top}\Sigma_{\theta,\omega}^{-1}s+\Sigma_{0}^{-1}\mu_{0}\bigg),\sigma^2\bigg(X^{\top}\Sigma_{\theta,\omega}^{-1}X+\Sigma_{0}^{-1}\bigg)^{-1}\bigg\}.
\end{equation}
Next,
\begin{align*}
    \psi(\sigma^2|s,\beta,\theta,\omega)\propto& \pi(s|\beta,\sigma^2,\theta,\omega)\pi(\beta|\sigma^2)\pi(\sigma^2)\\
    \propto& (\sigma^2)^{-m/2}(\sigma^2)^{-\alpha_1-1}(\sigma^2)^{-p/2}\exp\bigg\{-\frac{\gamma_1}{\sigma^2}\bigg\}\exp\bigg\{-\frac{1}{2\sigma^2}(s-X\beta)^{\top}\Sigma_{\theta,\omega}^{-1}(s-X\beta)\bigg\}\\
    &\exp\bigg\{-\frac{1}{2\sigma^2}(\beta-\mu_0)^{\top}\Sigma_{0}^{-1}(\beta-\mu_{0})\bigg\},
\end{align*}
 that is,
 \begin{equation}
   \label{eq:spsigcond}
    \sigma^2|s,\beta, \theta, \omega \sim \text{Inverse Gamma}\bigg(\frac{2\alpha_1+m+p}{2},\frac{2\gamma_1+(s-X\beta)^{\top}\Sigma_{\theta,\omega}^{-1}(s-X\beta)+(\beta-\mu_{0})\Sigma_{0}^{-1}(\beta-\mu_{0})}{2}\bigg).
\end{equation}
The conditional pdf of $\theta$ is
\begin{align*}
     \psi(\theta|s,\beta,\sigma^2,\omega)\propto& \pi(s|\beta,\sigma^2,\theta,\omega)\pi(\theta)\\
     \propto&|\Sigma|^{-1/2}\exp\bigg\{-\frac{1}{2}(s-X\beta)^{\top}\Sigma^{-1}(s-X\beta)\bigg\}\theta^{-\alpha_2-1}\exp\bigg\{-\frac{\gamma_2}{\theta}\bigg\},
\end{align*}
which is not a standard density. Letting $\nu=\log\theta$, we use a RWM step for sampling from $\psi(\nu|s,\beta,\sigma^2,\omega)$
which is given by
\begin{equation}
  \label{eq:spnucond}
      \psi(\nu|s,\beta,\sigma^2,\omega)\propto |\Sigma|^{-1/2}\exp\bigg\{-\frac{1}{2}(s-X\beta)^{\top}\Sigma^{-1}(s-X\beta)\bigg\}\exp\{-\alpha_2\nu\}\exp\bigg\{-\frac{\gamma_2}{\exp\{\nu\}}\bigg\}.
\end{equation}
Finally, the conditional pdf of $\omega$ is 
\begin{align*}
    \psi(\omega|s,\beta,\sigma^2,\theta)&\propto \pi(s|\beta,\sigma^2,\theta,\omega)\pi(\omega)\\
    &\propto|\Sigma|^{-1/2}\exp\bigg\{-\frac{1}{2}(s-X\beta)^{\top}\Sigma^{-1}(s-X\beta)\bigg\}\omega^{-\alpha_3-1}\exp\bigg\{-\frac{\gamma_3}{\omega}\bigg\}.
\end{align*}
Letting $\zeta=\log\omega$, we use a RWM step for sampling from $\psi(\zeta|s,\beta,\sigma^2,\theta)$ given by
\begin{equation}
  \label{eq:spzetcond}
   \psi(\zeta|s,\beta,\sigma^2,\theta)\propto |\Sigma|^{-1/2}\exp\bigg\{-\frac{1}{2}(s-X\beta)^{\top}\Sigma^{-1}(s-X\beta)\bigg\}\exp\{-\alpha_3\zeta\}\exp\bigg\{-\frac{\gamma_3}{\exp\{\zeta\}}\bigg\}.
\end{equation}
Starting with some initial values, the different variables are
sequentially updated in the following order using MH-within-Gibbs
algorithms.

\begin{itemize}
\item[Step 1] Update $s$ with one of the MH algorithms mentioned in Example 6 in the main article.
\item[Step 2] Update $\beta$ with a draw from \eqref{eq:spbetacond}.
 \item[Step 3] Update $\sigma^2$ with a draw from \eqref{eq:spsigcond}. 
\item[Step 4] Update $\theta (\nu)$ by drawing
  \begin{itemize}
  \item[(a)] a sample $\nu' \sim N(\nu, h_\nu)$, and
\item[(b)] accepting $\nu'$ with probability
\[
    \text{min}\bigg\{\frac{\psi(\nu'|s,\beta,\sigma^2,\omega)}{\psi(\nu|s,\beta,\sigma^2,\omega)},1\bigg\},
\]
 where $\psi(\nu|s,\beta,\sigma^2,\omega)$ is given in \eqref{eq:spnucond}.
  \end{itemize}
\item[Step 5] Update $\omega (\zeta)$ by drawing
  \begin{itemize}
  \item[(a)] a sample $\zeta' \sim N(\zeta, h_\zeta)$, and
    \item[(b)] accepting $\zeta'$ with probability
\[
    \text{min}\bigg\{\frac{\psi(\zeta'|s,\beta,\sigma^2,\theta)}{\psi(\zeta|s,\beta,\sigma^2,\theta)},1\bigg\},
\]
 where $\psi(\zeta|s,\beta,\sigma^2,\theta)$ is given in \eqref{eq:spzetcond}.
\end{itemize}
\end{itemize}
The step-sizes $h_\nu$ and $h_\zeta$ are chosen to achieve certain empirical acceptance rates.

\section{Further simulation results for the Bayesian variable
  selection example}
This section presents the results for different simulation settings corresponding to $R^2 = 0.75$ and $R^2 = 0.6$.
\begin{table}[h]
\begin{center}
\small\addtolength{\tabcolsep}{-2pt}
\caption{Results for different samplers for the Bayesian variable selection example ($R^2 = 75\%$).}
\label{tab:rsqpt75}
\begin{tabular}{lrrrrrrrrrr}
  \hline\hline
        &Success&$N_{\text{success}}$&Time 
       &MSPE     & MSE$_\beta$
        &\shortstack{Model \\ size} &Coverage
       & FDR
       & FNR
        &\shortstack{Jaccard \\ Index} \\
   \hline
     & \multicolumn{10}{c}{Independent design}\\
  \hline
  RW1  &   55& 32047& 126.19& 3.067& 1.193 &3.33 & 16 & 0.0 & 33.4 &66.6\\
  GMC1 & 100&     9 &  0.92 & 1.921& 0.038& 4.94 &94&    0.0& 1.2 & 98.8\\
  RW2 &  41& 31811& 119.83 & 3.432 & 1.525 &3.09 &11  &  0.0 & 3.8& 61.8\\
 GMC2 & 100 & 10& 0.97 &1.921 &0.038 &4.94 &94 &0.0 &1.2 &98.8\\
	\hline
 & \multicolumn{10}{c}{Compound symmetry design with $ r= 0.6$}\\
  \hline
  RW1  &   48 & 35349 & 179.56 & 182.214 & 89.068 & 3.14 &0 &7.1 & 41.6 & 57.0\\
  GMC1 & 100 &    8&  0.72 & 163.623 &50.076 &3.68 &3 &1.8 & 27.8 &71.5\\
  RW2 & 62 & 30539 & 146.02 & 186.615 & 97.337 & 2.94 & 0 & 6.1 & 44.6 &54.3\\
  GMC2 & 99 & 8 & 0.69 &164.830 &53.720 &3.69 &2 &4.0 &29.0 &70.0\\
	\hline
         & \multicolumn{10}{c}{Autoregressive correlation design with $ r= 0.6$}\\
  \hline
  RW1  &   69 &31807& 129.17 & 7.957 & 2.060 & 2.64& 56 &4.3 &16.3 &81.6\\
  GMC1 &  100 &  6 & 0.48 & 6.443 & 0.063 & 3.01 &100 & 0.2 &0.0 &99.8\\
   RW2 &  80 & 23211 &  86.97 & 8.948 & 3.823 & 2.62 & 48 & 10.8 &21.7 &74.8\\
  GMC2 &100& 6&  0.45 &6.443 &0.063 &3.01 &100 &0.2 &0.0 &99.8\\
        \hline
         & \multicolumn{10}{c}{Factor model design}\\
  \hline
  RW1  & 48 & 33337 & 157.63 & 194.104 & 49.328 & 3.47 &23 &3.6 &32.4 &67.0\\
  GMC1 &74 &11&   1.10 &160.975 &31.622 &4.01 &64 &4.7 &21.4 &78.3\\
  RW2 &56 &36078 & 154.99 & 206.560 & 60.587 &3.06 &13 &1.7 &39.8 &59.9\\
  GMC2 & 97 &10 & 0.97 &133.696 & 8.144& 4.75 &87 &0.0 &5.0 &95.0\\
        \hline
         & \multicolumn{10}{c}{Extreme correlation design}\\
  \hline
  RW1  & 36 &39268 &176.32 &84.674 &51.900 &3.28 &17 &7.1 &37.4 &61.8\\
  GMC1 & 56 & 25  & 2.04 &84.255 &79.184 &2.97 &46 &35.8 &49.4 &50.0\\
 RW2 & 47 &31541 &134.22& 90.094 &53.652 &3.08 &15 &4.4 &40.2 &59.3\\
GMC2 &100& 12& 1.16& 42.547 &0.535 &5.00 &100& 0.0& 0.0& 100.0\\
        \hline
\hline
\end{tabular}
\end{center}
\end{table}

\begin{table}
\begin{center}
\small\addtolength{\tabcolsep}{-2pt}
\caption{Results for different samplers for the Bayesian variable selection example ($R^2 = 60\%$).}
\label{tab:rsqpt6}
\begin{tabular}{lrrrrrrrrrr}
  \hline\hline
        &Success&$N_{\text{success}}$&Time 
       &MSPE     & MSE$_\beta$
        &\shortstack{Model \\ size} &Coverage
       & FDR
       & FNR
        &\shortstack{Jaccard \\ Index} \\
   \hline
     & \multicolumn{10}{c}{Independent design}\\
  \hline
  RW1  & 65 & 32864 &123.50 & 5.119 & 1.353 & 3.00 & 5& 0.0 & 40.0 &60.0\\
  GMC1 &  99 & 8  & 0.74 & 3.987& 0.233 & 4.28 &31& 0.0 &14.4 & 85.6\\
  RW2 & 47 & 27859 & 100.69 & 5.523 & 1.748 &2.75 &1  &  0.0 &45.0 &55.0\\
 GMC2 & 100& 8& 0.71 &3.974 &0.217 &4.32 &32 & 0.0 &13.6 &86.4\\
	\hline
 & \multicolumn{10}{c}{Compound symmetry design with $ r= 0.6$}\\
  \hline
  RW1  &  24 & 28640 & 124.54 & 416.001 & 220.563 & 1.76 & 0 &29.3 &76.0 & 22.7\\
  GMC1 & 94 & 11& 0.86 & 374.674 & 195.797 & 2.15 & 0 & 25.8 & 68.6 &29.6\\
  RW2 & 55 & 26374 & 109.37 & 395.957 &217.847 &1.84 &  0 &27.3 & 74.0 &24.5\\
  GMC2 & 89 & 9 &0.67 & 373.692 & 198.911 &2.19 &0 &28.2 & 68.8 &29.1\\
	\hline
         & \multicolumn{10}{c}{Autoregressive correlation design with $ r= 0.6$}\\
  \hline
  RW1  &   66 & 28428 & 108.76 & 14.870 &2.480 &2.47 & 47 & 3.6& 21.0 &77.4\\
  GMC1 &  100&  6& 0.47 &12.936 &0.174 &2.99 &98 &0.2 &0.7 &99.1\\
  RW2 & 70 &23635 &85.47 &15.817 &4.096 &2.41 &42 &8.5 &25.7 &72.6\\
  GMC2 &100& 6&0.45 &12.914 &0.151 &3.00 &99 &0.2 &0.3 &99.4\\
        \hline
         & \multicolumn{10}{c}{Factor model design}\\
  \hline
  RW1  &  45 & 35221 & 142.32& 348.548 & 74.650 &2.30 &6 &8.7 &50.4 &42.0\\
  GMC1 & 60  & 10 & 0.65 & 330.362 &78.499 &2.25 &21 &18.0 &52.0 &40.9\\
  RW2 & 53 & 27323 & 101.88& 369.089 &81.819 &1.94 &7 &7.1 &56.2 &36.5\\
  GMC2 & 89 & 10  &0.85& 290.329 &50.733 &3.25 &46 &11.5 &31.2 &61.5\\
        \hline
         & \multicolumn{10}{c}{Extreme correlation design}\\
  \hline
  RW1  &47 &35036 & 172.97 &39.850 &27.131 & 4.02 &36 &1.4 &20.8 &78.5\\
  GMC1 & 100& 12& 1.23& 16.968 &6.563 &4.85 &96 &3.5 &3.8&96.2\\
  RW2 & 58&31982 &147.35& 48.720 &37.798 &3.70 &19 &3.1 &28.4&70.3\\
  GMC2 &100 &11 &1.07 & 14.182& 0.178&5.00&100 &0.0 &0.0 &100.0\\
        \hline
\hline
\end{tabular}
\end{center}
\end{table}


\end{document}